\newif\ifhighlight
\definecolor{temp}{rgb}{0.0, 0.3, 0.5} 
\definecolor{mygray}{gray}{0.6}
\definecolor{mhgray}{gray}{0.4}
\definecolor{lgray}{gray}{0.88} 
\definecolor{pakistan}{rgb}{0.15, 0.55, 0.0}
\definecolor{kimidori}{rgb}{0.7,0.95,0.3}
\definecolor{palebrown}{rgb}{0.6, 0.46, 0.33}
\definecolor{mypink}{rgb}{0.9, 0.0, 0.4} 
\definecolor{antiquefuchsia}{rgb}{0.57, 0.36, 0.51}
\definecolor{darkcyan}{rgb}{0.0, 0.55, 0.55}
\definecolor{teal}{rgb}{0.0, 0.5, 0.5}
\definecolor{momo}{rgb}{1.0,0.7,0.8} 
\definecolor{yamabuki}{rgb}{1.0, 0.86, 0.0} 
\definecolor{navy}{rgb}{0.0,0.0,0.9}
\definecolor{wine}{rgb}{0.6,0.0,0.0}
  \newcommand{\mypink}[1]{\textcolor{mypink}{#1}}
  \newcommand{\mypink}[1]{#1}
\theoremstyle{thmstyleone}%
\newtheorem{theorem}{Theorem}
\newtheorem{proposition}[theorem]{Proposition}%
\newtheorem{lemma}[theorem]{Lemma}%
\theoremstyle{thmstyletwo}%
\theoremstyle{thmstylethree}%
\newtheorem{problem}{Problem}
\algrenewcommand\Require{\item[\textbf{Input:}]}
\algrenewcommand\Ensure{\item[\textbf{Output:}]}
\newcommand{\argmax}{\operatorname{arg\,max}}
\newcounter{experimentcounter}
\newcommand{\experimentlabel}[1]{%
    \refstepcounter{experimentcounter}%
    \def\@currentlabel{\theexperimentcounter}%
    \label{#1}%
}
\begin{document}

\title[Orientability of Undirected Phylogenetic Networks to a Desired Class]{Orientability of Undirected Phylogenetic Networks to a Desired Class: Practical Algorithms and Application to Tree-Child Orientation}


\author[1]{\fnm{Tsuyoshi} \sur{Urata}}\email{uratsuyo244@moegi.waseda.jp}

\author[1]{\fnm{Manato} \sur{Yokoyama}}\email{mana.aki.aya@akane.waseda.jp}

\author[1]{\fnm{Haruki} \sur{Miyaji}}\email{h.miyaji@ruri.waseda.jp}

\author*[2]{\fnm{Momoko} \sur{Hayamizu}}\email{hayamizu@waseda.jp}

\affil[1]{\orgdiv{Department of Pure and Applied Mathematics}, \orgname{Graduate School of Fundamental Science and Engineering, Waseda University}, \country{Japan}}

\affil[2]{\orgdiv{Department of Applied Mathematics, Faculty of Science and Engineering}, \orgname{Waseda University}, \country{Japan}}


\abstract{The $\mathcal{C}$-\textsc{Orientation} problem asks whether it is possible to orient an undirected graph to a directed phylogenetic network of a desired network class $\mathcal{C}$. This problem arises, for example, when visualising evolutionary data, as popular methods such as Neighbor-Net are distance-based and inevitably produce undirected graphs. The complexity of $\mathcal{C}$-\textsc{Orientation} remains open for many classes $\mathcal{C}$, including binary tree-child networks, and practical methods are still lacking. 
In this paper, we propose \mypink{(1)} an exact FPT algorithm for $\mathcal{C}$-\textsc{Orientation}, applicable to any class $\mathcal{C}$ \mypink{admitting a tractable membership test}, and parameterised by the reticulation number and the maximum size of minimal basic cycles\mypink{, and (2)} a very fast heuristic for \textsc{Tree-Child Orientation}.
While the state-of-the-art for $\mathcal{C}$-\textsc{Orientation} is a simple exponential time algorithm whose computational bottleneck lies in searching for appropriate reticulation vertex placements, our methods significantly reduce this search space. Experiments show that, although our FPT algorithm is still exponential, it significantly outperforms the existing method. The heuristic runs even faster but with increasing false negatives as the reticulation number grows. Given this trade-off, we also discuss theoretical directions for improvement and biological applicability of the heuristic approach.}

\keywords{Phylogenetic Networks, Tree-Child Networks, Acyclic Graph Orientation, FPT Algorithm, Exact Algorithm, Heuristic Algorithm} 



\maketitle

\section{Introduction}\label{sec1}

Phylogenetic networks are a powerful tool for representing complex evolutionary relationships between species that cannot be adequately modelled by trees. These networks are particularly useful in the presence of reticulate events, such as hybridisation, horizontal gene transfer (HGT) and recombination. Hybridisation refers to the interbreeding of individuals from different species, which can lead to the formation of a new hybrid species that shares genetic material from both parent species. HGT is the transmission of copied genetic material to another organism without being its offspring, a process particularly common in bacteria and archaea. Recombination is the exchange of genetic material between different genomes and is a common occurrence not only in viruses but also in bacteria, eukaryotes, and other organisms. While phylogenetic trees depict the hierarchical branching of evolutionary lineages, networks can represent both the hierarchical and non-hierarchical connections resulting from reticulate events, providing a more nuanced view of evolutionary history.

However, constructing directed phylogenetic networks from biological data remains a challenging task. 
Distance-based methods, such as Neighbor-Net \mypink{\cite{bryant2004neighbor}}, are widely used because they are scalable and helpful in visualising the data. 
However, the resulting networks are inevitably undirected, often making it difficult to interpret the evolutionary history. To provide a more informative representation of the data, it is meaningful to develop a method for transforming undirected graphs into directed phylogenetic networks in a way that ensures the resulting network has a desired property.

Recently, Huber {\textit{et al.}}\,\cite{huber2024orienting} introduced two different orientation problems, each considering orientation under a different constraint. The \textsc{Constrained Orientation} problem asks whether a given undirected phylogenetic network can be oriented to a directed network under the constraint of a given root edge and desired in-degrees of all vertices, and asks to find a feasible orientation under that constraint if one exists. In \cite{huber2024orienting}, it was shown that such a feasible orientation is unique if one exists, and a linear time algorithm for solving this problem was provided.
However, when an undirected network has been created from data, it is not usually the case that there is complete knowledge of where to insert the root and which vertices are reticulations. The $\mathcal{C}$-\textsc{Orientation} problem does not constrain the position of the root or the in-degree of the vertices. Instead, it asks whether a given binary network can be oriented to a directed phylogenetic network belonging to a desired class $\mathcal{C}$. 
 
The complexity of $\mathcal{C}$-\textsc{Orientation} is not fully understood for many classes $\mathcal{C}$, and no study has discussed practically useful methods. In \cite{huber2024orienting}, \textsc{Tree-Based Orientation} was shown to be NP-hard. Maeda {\textit{et al.}}\,\cite{maeda2023orienting} conjectured that \textsc{Tree-Child Orientation} is NP-hard. 
 \mypink{Given a graph with maximum degree five, determining whether or not the graph is tree-child orientable was shown to be NP-hard \cite{bulteau2023turning, docker2025}.}
\mypink{Even for} graphs of maximum degree three (i.e. binary networks), determining the orientability to a desired class, such as tree-child, tree-based or reticulation-visible networks, may not be easy \cite{bulteau2023turning}.
Although \cite{huber2024orienting} provided FPT algorithms for a special case of $\mathcal{C}$-\textsc{Orientation} where $\mathcal{C}$ satisfies several conditions, which are  theoretically applicable to various classes including tree-child networks but are not easy to implement,  there is no FPT algorithm to solve $\mathcal{C}$-\textsc{Orientation} in its general form. Also, no studies have pursued practically useful heuristics.
 
In this paper, we provide a practically efficient exponential time algorithm for $\mathcal{C}$-\textsc{Orientation} (Algorithm \ref{alg:proposed.BF}), which is FPT in both the reticulation number and the maximum size of minimal basic cycles selected by the algorithm. We also present a heuristic method for \textsc{Tree-Child Orientation} (Algorithm \ref{alg:proposed.fast}) which, although still exponential, runs very fast in practice because it only considers reticulation placements to maximise the sum of their pairwise distances. Using artificially generated networks, we compare the accuracy and execution time of the proposed methods for solving \textsc{Tree-Child Orientation} with those of the existing exponential time algorithm for $\mathcal{C}$-\textsc{Orientation} (Algorithm 2 in \cite{huber2024orienting}). Our theoretical and empirical results demonstrate the usefulness of Algorithm \ref{alg:proposed.BF}, especially for relatively large input graphs with $5$ or more reticulations, where the exponential time method in \cite{huber2024orienting} becomes computationally infeasible. Our Algorithm \ref{alg:proposed.fast}, while much faster than Algorithm \ref{alg:proposed.BF}, tends to decrease in accuracy as the reticulation number increases.

The rest of the paper is structured as follows. 
Section \ref{sec:definitions.notation} provides the necessary mathematical definitions and notation, including the definition of phylogenetic networks. 
Section \ref{sec:problems.literature} briefly reviews relevant results from \cite{huber2024orienting} and formally states the problems of interest. Section \ref{sec:theoretical.aspects} gives the theoretical background of our proposed methods, including the concept of `cycle basis' and a theorem that allows us to reduce the search space (Theorem \ref{thm:R.cycle}). Section \ref{sec:proposed.methods} describes the proposed exact method (Algorithm \ref{alg:proposed.BF}) and a heuristic method (Algorithm \ref{alg:proposed.fast}). Theorem \ref{thm:exact} ensures that the heuristic is correct when $r\leq 2$. 
We analyse the time complexity of Algorithm \ref{alg:proposed.BF}.
Section \ref{sec:experiments} explains the experimental setup, including the method used to create undirected graphs (details can be found in the appendix), and presents the results of three  experiments. Section \ref{sec:discussion} discusses the limitations and biological application of the heuristic method. Finally, Section \ref{sec:conclusion.future.work} concludes the paper and outlines future research directions.

\section{Definitions and Notation}\label{sec:definitions.notation}
\subsection{Graph theory}
An \emph{undirected graph} is an ordered pair $(V, E)$ consisting of a set $V$ of \emph{vertices} and a set $E$ of \emph{edges} between vertices without any orientation. Given an undirected graph $G$, its vertex set and edge set are denoted by $V(G)$ and $E(G)$, respectively. 
An edge of an undirected graph between vertices $u$ and $v$ is denoted by $\{u, v\}$ or $\{v, u\}$.  
An undirected graph is \emph{simple} if it contains neither a loop nor multiple edges, namely, any edge $\{u, v\}$ satisfies $u \neq v$, and any edges $\{u, v\} \neq \{u^\prime, v^\prime\}$ satisfy at least one of $u \neq u^\prime$ and $v \neq v^\prime$.    
Given simple graphs $G_i = (V_i,E_i)$ ($i=1,2,\dots, n$), the graph $(V, E)$ with $V:=\bigcup_{i=1}^nV_i$ and  $E:=\bigcup_{i=1}^nE_i$ is called \emph{the union} of the graphs $G_i$. 

For a vertex $v$ of an undirected graph $G$, the \emph{degree} of $v$ in $G$, denoted by $\mathrm{deg}_G(v)$, is the number of edges of $G$ that joins $v$ with another vertex of $G$.

An \emph{(undirected) path} is an undirected graph $P$ with a vertex set $\{v_1, v_2, \ldots, v_\ell\}$ and an edge set $\{\{v_1, v_2\}, \{v_2, v_3\}, \ldots, \{v_{\ell -1}, v_\ell \}\}$. The number of edges of a path $P$ is called the \emph{length} of  $P$. 
An \emph{(undirected) cycle} is an undirected graph $C$ with a vertex set $\{v_1, v_2, \ldots, v_\ell\}$ and an edge set $E = \{\{v_1, v_2\}, \{v_2, v_3\}, \ldots, \{v_{\ell -1}, v_\ell \}, \{v_\ell, v_1\}\}$. The number of edges of a cycle $C$ is called the \emph{length} of $C$.
A \emph{subgraph} of an undirected graph $G=(V, E)$ is a graph $G^\prime=(V^\prime, E^\prime)$ such that $V^\prime \subseteq V$ and $E^\prime \subseteq E$. In this case, $G$ \emph{contains} $G^\prime$. 
If an undirected graph $G$ contains a cycle $C$ as a subgraph, $C$ is a cycle \emph{of} $G$. 
An undirected graph $G$ is \emph{connected} if for any $u,v\in V(G)$, $G$ contains a path between $u$ and $v$. The \emph{distance} $d_G (u,v)$ between $u$ and $v$ in $G$ is defined by the length of the shortest path connecting $u$ and $v$ in $G$. 

A \emph{directed graph} is an ordered pair $(V, A)$ consisting of a set $V$ of vertices and a set $A$ of oriented edges called \emph{arcs}. 
Given a directed graph $D$, its vertex set and arc set are denoted by $V(D)$ and $A(D)$, respectively. 
An arc that goes from vertex $u$ to vertex $v$ is denoted by $(u, v)$. 
Given an arc $(u,v)$, $u$ is a \emph{parent} of $v$ and $v$ is a \emph{child} of $u$. 
A directed graph is \emph{simple} if it contains neither a loop nor multiple arcs.
 For a vertex $v$ of a directed graph $D$, the \emph{in-degree} of $v$ in $D$, denoted by $\mathrm{indeg}_D(v)$,  is the number of arcs of $D$ that arrive at $v$. 
 Likewise, the \emph{out-degree} of $v$ in $D$, denoted by $\mathrm{outdeg}_D(v)$, is the number of arcs of $D$ that start from $v$. 
A \emph{directed cycle} is a directed graph with a vertex set $V = \{v_1, v_2, \ldots, v_\ell \}$ and an arc set $A = \{(v_1, v_2), (v_2, v_3), \ldots, (v_{\ell-1}, v_\ell), (v_\ell, v_1)\}$. 
A subgraph of a directed graph is defined in the same way as before. A \emph{directed acyclic graph (DAG)} is a directed graph that contains no cycle as its subgraph. 
Given a directed graph $D$, the undirected graph obtained by ignoring the direction of all its arcs is called \emph{the underlying graph} of $D$ and  denoted by $U(D)$.

\subsection{Phylogenetic networks}
Throughout this paper, $X$ is a finite set with $|X| \ge 2$, representing a set of the present-day species of interest.
All graphs considered here are simple and finite, meaning the numbers of vertices and edges are finite.
An \emph{undirected binary phylogenetic network}  on $X$ is a simple, connected, undirected graph $N$ such that its vertex set $V$ is partitioned into  $V_I :=\{v\in V \mid \mathrm{deg}_N(v)=3\}$ and $V_L:=\{v\in V \mid \mathrm{deg}_N(v)=1\}$, and $V_L$ can be identified with $X$. Each vertex in $V_I$ and in $V_L$ is called an \emph{internal vertex} and a \emph{leaf} of $N$, respectively.  

A \emph{directed binary phylogenetic network} on $X$ is a simple, acyclic directed graph $D$ such that the underlying graph of $D$ is connected, the vertex set $V$ of $D$ contains a unique vertex $\rho$ with $({\mathrm{indeg}}_D(\rho), \mathrm{outdeg}_D(\rho))=(0,2)$ and the set $V\setminus \{\rho\}$ is partitioned into $V_T:=\{v\in V \mid (\mathrm{indeg}_D(v), \mathrm{outdeg}_D(v))=(1,2)\}$ and $V_R:=\{v\in V \mid (\mathrm{indeg}_D(v), \mathrm{outdeg}_D(v))=(2,1)\}$, and $V_L:=\{v\in V \mid (\mathrm{indeg}_D(v), \mathrm{outdeg}_D(v))=(1,0)\}$ that can be identified with $X$. The vertex $\rho$ is called \emph{the root} of $D$, and each vertex in $V_T$, in $V_R$ and in $V_L$ is called a \emph{tree vertex}, a \emph{reticulation} and a \emph{leaf}  of $D$, respectively. 

A directed binary phylogenetic network $D$ on $X$ is a \emph{tree-child network} if every non-leaf vertex of $D$ has at least one tree vertex as a child \cite{cardona2008comparison}. Tree-child networks are characterised by the absence of the two forbidden subgraphs \cite{semple2016phylogenetic} that are illustrated in Fig. \ref{fig:forbidden}. Namely, tree-child networks contain neither a vertex with two reticulation children nor a reticulation with a reticulation child.  

\begin{figure}[h]
\begin{center}
\includegraphics[width=.25\textwidth]{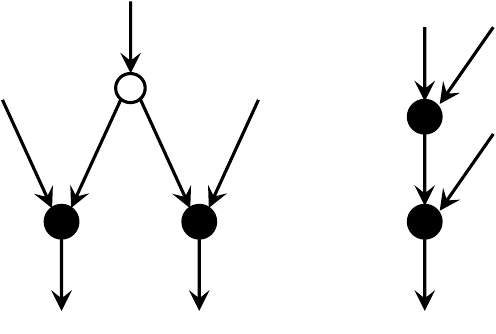}
\caption{The forbidden subgraphs for tree-child networks. Left: a vertex with two reticulation children. Right: a reticulation with a reticulation child. The black vertices represent reticulations and the white circle indicates a tree vertex.}
\label{fig:forbidden}
\end{center}
\end{figure}

\section{Acyclic orientation problems and known results}\label{sec:problems.literature}
For an undirected phylogenetic network $N$ on $X$, \emph{orienting} $N$ is a procedure that inserts the root $\rho$ into a unique edge $e_\rho = \{u, v\}$ of $N$ by replacing the edge $\{u, v\}$ with the arcs $(\rho, u), (\rho, v)$ and then orienting the other edges so that the resulting graph $\vec{N}$ is a directed phylogenetic network on $X$. In other words, a directed phylogenetic network $\vec{N}$ on $X$ is an \emph{orientation} of $N$ if the underlying graph $U(\vec{N})$ becomes isomorphic to $N$ after suppressing the root $\rho$ of $U(\vec{N})$ (i.e. replacing the undirected path $u-\rho-v$ with the edge $\{u,v\}$).

\mypink{Determining whether an undirected graph can be oriented into a directed graph satisfying certain constraints is a long-standing topic in graph theory, with classical foundations \cite{robbins1939theorem, Nash-Williams_1960} and ongoing developments (e.g. \cite{buchinoriented, hartmann2023make, disser2016degree, asahiro2007graph}).  
However, in the context of phylogenetic networks, such orientation problems have only recently been studied \cite{huber2024orienting, maeda2023orienting, bulteau2023turning, docker2025}}.  
Huber {\textit{et al.}}\,\cite{huber2024orienting} discussed two types of problems,  \textsc{Constrained Orientation} and $\mathcal{C}$-\textsc{Orientation}.

\subsection{Orientation with a desired root position and in-degrees}
\begin{problem}(\textsc{Constrained Orientation})
\label{prob:constrained orientation}
\begin{description}
\item[INPUT: ]
An undirected (not necessarily binary) phylogenetic network $N = (V, E)$ on $X$, an edge $e_{\rho} \in E$ into which a unique root $\rho$ is inserted, and the desired in-degree $\delta_N^-(v)$ of each $v \in V$. 
\item[OUTPUT: ]
An orientation $\vec{N}$ of $N$ that satisfies the constraint $(e_{\rho}, \delta_N^-)$ if it exists, and `NO' otherwise.
\end{description}
\end{problem}

We note that when $N$ is binary, the constraint $\delta_N^-$ in Problem~\ref{prob:constrained orientation} specifies which internal vertices of $N$ are to become reticulations or tree vertices in $\vec{N}$. 
Below we restate the relevant result from \cite{huber2024orienting}.

\begin{theorem}[Part of Theorems 1 and 2 in \cite{huber2024orienting}]\label{thm:huber.constrained}
In Problem \ref{prob:constrained orientation}, 
if there exists an orientation $\vec{N}$ of $N$ that satisfies the constraint $(e_\rho, \delta_N^-)$, then $\vec{N}$ is unique for $(e_\rho, \delta_N^-)$.  
Algorithm 1 in \cite{huber2024orienting} can determine whether or not $\vec{N}$ exists and find $\vec{N}$ if it does, both in $O(|E|)$ time.
\end{theorem}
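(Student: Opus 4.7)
The plan is to handle uniqueness first via a symmetric-difference argument, and then sketch a linear-time propagation algorithm that both decides feasibility and recovers the unique orientation.

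For uniqueness, suppose $\vec{N}_1$ and $\vec{N}_2$ are two orientations of $N$ that both satisfy the constraint $(e_\rho, \delta_N^-)$. Let $\Delta \subseteq E(N)$ be the set of edges that receive different orientations in $\vec{N}_1$ and $\vec{N}_2$. For every vertex $v \in V(N) \cup \{\rho\}$, since $v$ has the same in-degree $\delta_N^-(v)$ in both orientations, and the edges outside $\Delta$ contribute identically to the in-degree in both, the number of edges of $\Delta$ entering $v$ under $\vec{N}_1$ must equal the number of edges of $\Delta$ entering $v$ under $\vec{N}_2$. But by the definition of $\Delta$, each edge of $\Delta$ that is incoming to $v$ in $\vec{N}_1$ is outgoing from $v$ in $\vec{N}_2$ and vice versa. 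Hence the in-degree and out-degree of $v$ within the digraph $\Delta^{\vec{N}_1}$ (the subgraph $\Delta$ inheriting its orientation from $\vec{N}_1$) coincide. Thus $\Delta^{\vec{N}_1}$ is an Eulerian directed subgraph, so if $\Delta \neq \emptyset$ it must contain a directed cycle. This contradicts the acyclicity of $\vec{N}_1$, forcing $\Delta = \emptyset$ and hence $\vec{N}_1 = \vec{N}_2$.

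For the algorithmic part, I would describe a propagation scheme in the style of \cite{huber2024orienting}. Insert $\rho$ into $e_\rho = \{u,v\}$, orient $(\rho,u)$ and $(\rho,v)$, and maintain for every vertex $w$ a counter $c^-(w)$ of currently incoming arcs and a counter $c^\circ(w)$ of still-unoriented incident edges. Place a vertex $w$ in a work queue whenever one of the following forcing rules becomes applicable: (i) if $c^-(w) = \delta_N^-(w)$, every remaining incident edge must be outgoing from $w$; (ii) if $c^-(w) + c^\circ(w) = \delta_N^-(w)$, every remaining incident edge must be incoming to $w$. Applying a forcing rule orients some edges, updates the counters at the other endpoints, and may trigger further forcings. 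If ever the desired in-degree is exceeded at some vertex, return NO. If the process terminates with all edges oriented and no conflict, return the resulting $\vec{N}$; if unoriented edges remain but no rule fires, also return NO (by the uniqueness argument, any feasible orientation would have been forced).

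Correctness of the algorithm follows because every orientation assignment it makes is dictated by the in-degree constraint, and uniqueness above shows there is no room for choice. For the $O(|E|)$ running time, observe that each edge is oriented at most once, each orientation triggers only constant work at its two endpoints, and counters can be maintained in constant time per update; the queue-based implementation thus processes a total work proportional to $|E|$. The main obstacle I anticipate is making sure that the NO case is handled correctly when the propagation stalls: one must argue that if no forcing rule applies and unoriented edges remain, then no valid orientation exists, which again follows from the uniqueness argument applied hypothetically to any completion.
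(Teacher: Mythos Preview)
The paper does not actually prove this statement: Theorem~\ref{thm:huber.constrained} is explicitly quoted from \cite{huber2024orienting} (``Part of Theorems 1 and 2''), and the text immediately afterwards refers the reader to that paper for details and for the `degree cut' characterisation. So there is no in-paper proof to compare against; your sketch is supplying something the authors deliberately outsourced.

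On its own merits, your uniqueness argument is correct and clean: the symmetric-difference subgraph $\Delta^{\vec N_1}$ has balanced in- and out-degree at every vertex, hence is Eulerian, hence (if nonempty) contains a directed cycle, contradicting acyclicity of $\vec N_1$. The propagation algorithm you describe is also the right shape and matches the spirit of Algorithm~1 in \cite{huber2024orienting}.

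There is, however, a small gap in your treatment of the stall case. You write that ``no valid orientation exists, which again follows from the uniqueness argument applied hypothetically to any completion.'' Uniqueness alone does not give this: knowing there is at most one feasible completion does not by itself rule out the existence of exactly one. What you actually need is the following direct argument. If propagation stalls with a nonempty set $U$ of unoriented edges and a feasible orientation $\vec N$ exists (necessarily extending the forced partial orientation), then at every vertex $v$ incident to $U$ neither forcing rule fired, so $c^-(v)<\delta_N^-(v)$ and $c^-(v)+c^\circ(v)>\delta_N^-(v)$; equivalently, in the subgraph $U$ with orientations inherited from $\vec N$, every such $v$ has in-degree $\ge 1$ and out-degree $\ge 1$. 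Following out-arcs then produces a directed cycle inside $\vec N$, a contradiction. This is in the same spirit as your uniqueness proof (extract a directed cycle from degree balance) but is not literally an application of it; you should state this step explicitly rather than invoke uniqueness.
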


Although Theorem~\ref{thm:huber.constrained} is sufficient for our purpose, a necessary and sufficient condition for when $\vec{N}$ exists under the constraint $(e_{\rho}, \delta_N^-)$ was provided in \cite{huber2024orienting} using the notion of `degree cut'. The interested reader is referred to  \cite{huber2024orienting}.

\subsection{Orientation to a desired class $\mathcal{C}$ of networks}
The next problem is about orientation under a different constraint. It asks whether a given graph can be oriented to be a directed phylogenetic network in a desired class $\mathcal{C}$, where the position of the root edge and the in-degree of each vertex are unknown. In \cite{huber2024orienting}, $\mathcal{C}$-\textsc{Orientation} was defined under the assumption that the input $N$ is binary, unlike Problem~\ref{prob:constrained orientation}. \mypink{Note that the class $\mathcal{C}$ is assumed to admit a tractable membership test.}

\begin{problem}($\mathcal{C}$-\textsc{Orientation})
    \label{prob:C-orientation}
    \begin{description} 
        \item[INPUT: ] 
        An undirected binary phylogenetic network $N$ on $X$.
        \item[OUTPUT: ]
        An orientation $\vec{N}$ of $N$ such that $\vec{N}$ belongs to the class $\mathcal{C}$ of directed binary phylogenetic networks on $X$ if it exists, and `NO' otherwise.
        \end{description}
\end{problem}

Huber {\textit{et al.}}\,\cite{huber2024orienting} described a simple exponential time algorithm for solving Problem~\ref{prob:C-orientation} (Algorithm 2 in \cite{huber2024orienting}), which uses the above-mentioned $O(|E|)$ time algorithm for Problem \ref{prob:constrained orientation}. It repeatedly solves Problem \ref{prob:constrained orientation} for all possible combinations of the root edge $e_\rho \in E$ and a set  $\{v_1,\dots,v_r\}=V_R\subseteq V$ of reticulations (i.e. vertices of desired in-degree $2$) until it finds an orientation $\vec{N}$ of $N$ that satisfies $(e_\rho, \delta_N^-)$ and belongs to class $\mathcal{C}$. 

The complexity of $\mathcal{C}$-\textsc{Orientation} depends on $\mathcal{C}$ but is still unknown for most of the popular classes of phylogenetic networks. For example, when $\mathcal{C}$ is the class of trees, Problem \ref{prob:C-orientation} is obviously solvable in polynomial time, and when $\mathcal{C}$ is the class of tree-based networks, it was shown in \cite{huber2024orienting} that the problem is NP-hard. An important remark is that if $\mathcal{C}^\prime$ is a subclass of $\mathcal{C}$, it does not necessarily imply that $\mathcal{C}^\prime$-\textsc{Orientation} is easier or harder than $\mathcal{C}$-\textsc{Orientation}. In fact, the complexity of the following problem is still open \cite{docker2025, bulteau2023turning}.

\begin{problem}(\textsc{Tree-Child Orientation})
    \label{prob:tree-child orientation}
    \begin{description} 
        \item[INPUT: ] 
        An undirected binary phylogenetic network $N$ on $X$.
        \item[OUTPUT: ]
        An orientation $\vec{N}$ of $N$ that is a tree-child network on $X$ if it exists, and `NO' otherwise.
        \end{description}
\end{problem}

While FPT algorithms for a special case of $\mathcal{C}$-\textsc{Orientation} were provided in \cite{huber2024orienting}, it remains challenging to develop 
a practical method for this type of orientation problems. This motivates us to explore a heuristic approach for solving Problem \ref{prob:tree-child orientation}. 

In what follows, when there exists such an orientation $\vec{N}$ of $N$ as described in Problem \ref{prob:tree-child orientation}, we say that $N$ is \emph{tree-child orientable} and call $\vec{N}$ a \emph{tree-child orientation} of $N$.

\section{Theoretical aspects of the proposed methods}\label{sec:theoretical.aspects}
Before discussing Problem \ref{prob:tree-child orientation}, we consider a general setting where we want to orient any undirected phylogenetic network $N=(V,E)$ on $X$ to a rooted directed phylogenetic network $\vec{N}$ on $X$. Then $\vec{N}$ must contain $r = |E| - |V| + 1$ reticulation vertices (which can be easily verified by induction, and can also be derived from the equations in Lemma 2.1 in \cite{mcdiarmid2015counting}), and so we need to decide which $r$ vertices among $|V|-|X|$ internal vertices of $N$ will have in-degree 2 in $\vec{N}$.
The number of possible ways to select $r$ vertices from non-leaf vertices of $N$ is $\binom{|V|-|X|}{r}$, which is exponential. We will now consider how we can reduce the number of candidates to examine.

\begin{lemma}
	\label{lem:R.position}
   Let $N=(V,A)$ be any directed acyclic graph and let $U(N)=(V, E)$ be the underlying graph of $N$. Then, any cycle $C$ of $U(N)$ has a vertex whose in-degree in $N$ is at least $2$.
\end{lemma}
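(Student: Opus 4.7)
The plan is to argue by contradiction, exploiting the fact that a cycle of length $\ell$ in $U(N)$ corresponds to exactly $\ell$ arcs in $N$ whose total in-degree contribution around the cycle must sum to $\ell$. Concretely, I would fix a cycle $C = v_1 v_2 \cdots v_\ell v_1$ of $U(N)$ and, for each edge $\{v_i, v_{i+1}\}$ of $C$ (indices mod $\ell$), consider the orientation of the corresponding arc in $N$. At each vertex $v_i$ there are exactly two cycle edges incident, and the number of them oriented toward $v_i$ in $N$ is either $0$, $1$, or $2$; call this number the \emph{cycle in-degree} $d^-_C(v_i)$ of $v_i$. Clearly $d^-_C(v_i) \le \mathit{indeg}_N(v_i)$, so it suffices to show that $d^-_C(v_j) \ge 2$ for some $j$.

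Suppose toward a contradiction that every vertex on $C$ has in-degree at most $1$ in $N$; then $d^-_C(v_i) \le 1$ for every $i$. Since each cycle arc contributes exactly $1$ to the sum $\sum_{i=1}^{\ell} d^-_C(v_i)$, this sum equals $\ell$. Combined with $d^-_C(v_i) \le 1$, this forces $d^-_C(v_i) = 1$ for every $i$. Thus at each vertex $v_i$ exactly one of the two incident cycle arcs enters $v_i$ and the other leaves $v_i$.

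Finally, I would argue that this forces the cycle arcs to form a consistent circular orientation: starting from any vertex and following the unique outgoing cycle arc at each step produces a directed walk $v_{i_1} \to v_{i_2} \to \cdots$ along the edges of $C$, which must eventually close up into a directed cycle in $N$. This contradicts the hypothesis that $N$ is acyclic, completing the proof. I do not expect any real obstacle here; the only subtlety is the double-counting step that upgrades ``at most $1$'' to ``exactly $1$'' at every vertex, from which the existence of a directed cycle is immediate.
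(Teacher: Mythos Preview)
Your proof is correct and follows essentially the same approach as the paper: both argue by contradiction and show that if every cycle vertex has in-degree at most $1$ in $N$, then the arcs along $C$ must form a directed cycle. Your double-counting step that forces $d^-_C(v_i)=1$ at every vertex is slightly cleaner than the paper's version, which first rules out $\mathit{indeg}_N(v_i)=0$ by a separate argument and then propagates the orientation edge-by-edge around the cycle, but the substance is the same.
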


\begin{proof}
    To obtain a contradiction, suppose $U(N)$ contains a cycle $C=(v_0,v_1,\dots,v_{n-1})$ such that \mypink{the desired in-degree} $\mathrm{indeg}_N(v_i)\leq 1$ for every $v_i \in V(C)$. Let $\vec{C}$ be the subgraph of $N$ that corresponds to $C$. 
\mypink{We claim that each $v_i \in V(C)$ satisfies  $\mathrm{indeg}_N(v_i) = 1$. To see this, assume that there exists $v_i \in V(C)$ with $\mathrm{indeg}_N(v_i) = 0$. As $C$ is a cycle, $|V(\vec{C})| = |A(\vec{C})| = n$, so $\sum_{v \in V(\vec{C})} \mathrm{indeg}_{\vec{C}}(v)  = n$. By $\mathrm{indeg}_N(v_i) = 0$, we have $\mathrm{indeg}_{\vec{C}}(v_i)=0$, so $\sum_{v \in V(\vec{C})} \mathrm{indeg}_{\vec{C}}(v) = \sum_{v \in V(\vec{C})\setminus \{v_i\}} \mathrm{indeg}_{\vec{C}}(v)$. For any $v \in V(\vec{C})$, we have $\mathrm{indeg}_{\vec{C}}(v) \leq \mathrm{indeg}_N(v) \leq 1$. Thus, $\sum_{v \in V(\vec{C})\setminus \{v_i\}} \mathrm{indeg}_{\vec{C}}(v) \leq n-1$, which contradicts $\sum_{v \in V(\vec{C})} \mathrm{indeg}_{\vec{C}}(v) = n$. Hence, the above claim holds. }
    Let $\{v_{i-1}, v_i\}$ and $\{v_i, v_{i+1}\}$ be two consecutive undirected edges of $C$. We may assume that $\vec{C}$ has the arc $(v_{i-1}, v_i)$, instead of $(v_i, v_{i-1})$. Since $\mathrm{indeg}_N(v_i) = 1$, $\vec{C}$ contains $(v_i, v_{i+1})$, not $(v_{i+1}, v_i)$. The same argument applies to all arcs of $\vec{C}$. It follows that $\vec{C}$ is a directed cycle, a contradiction. 
\end{proof}

Lemma~\ref{lem:R.position} allows us to exclude some of the inappropriate reticulation placements (specifically, cases where there is a cycle having no reticulation) that make orientation impossible. However, checking all cycles in a graph is computationally inefficient. To reduce the search space, we will now introduce some relevant concepts.

For a connected undirected graph $N=(V,E)$, the \emph{cycle rank} of $N$ is defined to be the number $r:=|E|-|V|+1$ (e.g. p.24 in \cite{gross2003handbook}), which is also known as circuit rank, cyclomatic number and the (first-order) Betti number of $N$. Note that $r$ is zero if $N$ is a tree and that $r$ is the number of desired reticulation vertices if $N$ is a binary phylogenetic network that is an instance of Problem \ref{prob:constrained orientation}. 
The cycle rank $r$ of $N$ can also be interpreted as the rank of a vector space called the `cycle space', where each of the $r$ basis vectors, \mypink{each} called \emph{basic cycles}, corresponds to the edge-set of a simple cycle in $N$. 
If we define the summation of cycles $C$ and $C^\prime$ as the \mypink{even-degree graph} induced by the symmetric difference of their edge-sets, then the cycle space of $N$ can be identified with the set of even-degree subgraphs of $N$, so any cycle in $N$ can be expressed as a sum of basic cycles in a cycle basis $\mathcal{S}$ (for details, see Section 1.9 of \cite{Diestel2012GraphT4}, Section 4.3 of \cite{bondy2008graph} and Section 6.4.2 of \cite{gross2003handbook}).

Proposition \ref{prop:digraph} will be used in the proof of the main theorem (Theorem \ref{thm:R.cycle}).

\begin{proposition}\label{prop:digraph}
Let $D=(V,A)$ be any directed graph. If each $v\in V$ satisfies $(\mathrm{indeg}_D(v), \mathrm{outdeg}_D(v))$$\in \{(0,2), (1,1), (1,2), (2,1), (2,0)\}$, then 
$r=|V_2|-|V_0|+1$ holds, where $r:=|A|-|V|+1$ and $V_2$ (resp. $V_0$) denotes the set of vertices $v$ with $\mathrm{indeg}_D(v)=2$ (resp. $\mathrm{indeg}_D(v)=0$). If, in addition, $D$ is acyclic, then $r \leq |V_2|$ holds.
\end{proposition}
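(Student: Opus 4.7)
The plan is to prove the first identity by a straightforward double-counting argument on in-degrees, and to deduce the inequality in the acyclic case from the fact that every finite acyclic digraph admits a source.

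First I would partition $V$ according to in-degree. The hypothesis restricts each in-degree to $\{0,1,2\}$, so $V = V_0 \sqcup V_1 \sqcup V_2$, where $V_1$ denotes the vertices of in-degree exactly $1$ (the types $(1,1)$ and $(1,2)$). Summing in-degrees over all vertices yields
\[
|A| \;=\; \sum_{v \in V} \mathrm{indeg}_D(v) \;=\; 0\cdot|V_0| + 1\cdot|V_1| + 2\cdot|V_2| \;=\; |V_1| + 2|V_2|.
\]
Combining this with $|V| = |V_0| + |V_1| + |V_2|$ and the definition $r := |A| - |V| + 1$ gives
\[
r \;=\; \bigl(|V_1| + 2|V_2|\bigr) - \bigl(|V_0| + |V_1| + |V_2|\bigr) + 1 \;=\; |V_2| - |V_0| + 1,
\]
which is the first claim. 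Note that the out-degree information is not even needed for this identity; the constraint on in-degrees alone suffices.

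For the second claim, assume in addition that $D$ is acyclic (and, as is implicit in the phylogenetic setting, that $V$ is non-empty). I would invoke the standard fact that any finite acyclic directed graph contains at least one vertex of in-degree $0$; this can be seen either by taking the first vertex in a topological order or by following arcs backwards from any vertex and using acyclicity to guarantee termination at a source. Under the degree restriction, such a source is necessarily of type $(0,2)$ and therefore lies in $V_0$, so $|V_0| \geq 1$. Substituting into the identity just established gives $r = |V_2| - |V_0| + 1 \leq |V_2|$, as required.

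The argument is entirely routine: the first part is pure bookkeeping, and the second part reduces to the existence of a source. The only mildly delicate point is the edge case $V = \emptyset$, in which the inequality $r \leq |V_2|$ would fail; I would either tacitly exclude this case as we do throughout the paper (since phylogenetic networks satisfy $|X| \geq 2$) or state the non-emptiness of $V$ as a standing assumption at the start of the proof.
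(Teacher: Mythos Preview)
Your proof is correct and follows essentially the same approach as the paper: both partition $V$ by in-degree, use the handshake identity $|A|=\sum_v \mathrm{indeg}_D(v)$ to derive $r=|V_2|-|V_0|+1$, and then invoke the existence of a source in a finite DAG to obtain $|V_0|\geq 1$. The only cosmetic difference is that the paper proves the existence of a source via a longest-path argument, whereas you cite topological order or backward traversal; your explicit remark on the $V=\emptyset$ edge case is a nice addition that the paper leaves implicit.
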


\begin{proof}
Let $V_1:=V \setminus (V_0 \cup V_2)$. 
Since the sum of in-degrees of all vertices must equal $|A|$ , we have $|A| = |V_1| + 2|V_2|$. 
Hence,  $r = |A| - |V| +1 = |V_1|+2|V_2| - (|V_0| + |V_1| + |V_2|) + 1 = |V_2|-|V_0|+1$. 

We claim $|V_0| \geq 1$ holds if $D$ is acyclic. To verify this, we will prove that any finite directed acyclic graph $G$ has at least one vertex of in-degree zero. Let $P$ be a longest vertex-disjoint directed path in $G$, with $s$ and $t$ as the start and end vertices of $P$, respectively. If $\mathrm{indeg}_G(s)>0$, there exists a vertex $s^\prime \in V(G)$ with $(s^\prime, s)  \in A(G)$. Since $G$ is acyclic, $s^\prime$ is not a vertex of $P$. Then, by adding $(s^\prime,s)$ to $P$, one can obtain a vertex-disjoint directed path in $G$ that is longer than $P$, but this contradicts the maximality of $P$. Thus, $\mathrm{indeg}_G(s)=0$ holds, which proves the above claim. Hence, we obtain $r \leq |V_2|$.
\end{proof}

\begin{theorem}\label{thm:R.cycle}
	Let $(N, e_\rho, \delta_N^-)$ be an instance of Problem \ref{prob:constrained orientation} where $N$ is binary and let $V_R$ denote the set of reticulations specified by $\delta_N^- (v)$. If there exists an orientation $\vec{N}$ of $N$ satisfying the constraint $(e_{\rho}, \delta_N^-)$, then for any cycle basis $\mathcal{S}$ of $N$, there exists a bijection $\phi: \mathcal{S} \to V_R$ with the property that $\phi(C) \in V(C)$ holds for each $C \in \mathcal{S}$.   
\end{theorem}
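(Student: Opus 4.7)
The plan is to recast the claim as a perfect matching problem and apply Hall's marriage theorem. Since both $\mathcal{S}$ and $V_R$ have size equal to the cycle rank $r$ of $N$, a map $\phi$ with the stated property is precisely a perfect matching in the bipartite graph $H$ whose parts are $\mathcal{S}$ and $V_R$, and in which $C$ is joined to $v$ exactly when $v\in V(C)$. By Hall's theorem, it therefore suffices to verify, for every nonempty $\mathcal{S}'\subseteq\mathcal{S}$, the inequality
\[ \Bigl| V_R \cap \bigcup_{C\in\mathcal{S}'} V(C) \Bigr| \;\ge\; |\mathcal{S}'|. \]

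Fix such an $\mathcal{S}'$ and set $k:=|\mathcal{S}'|$. Let $G'$ be the subgraph of $N$ with edge set $\bigcup_{C\in\mathcal{S}'} E(C)$ (and vertex set the endpoints of those edges), and let $\vec{G}'$ be the corresponding sub-digraph of $\vec{N}$ with the inherited orientations. Since the elements of $\mathcal{S}'$ are linearly independent in the cycle space of $G'$, the cycle rank of $G'$ is at least $k$; equivalently,
\[ |E(G')|-|V(G')|+c(G') \;\ge\; k, \]
where $c(G')$ is the number of connected components of $G'$.

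Every vertex of $G'$ lies on some cycle in $\mathcal{S}'$, so its undirected degree in $G'$ is at least $2$; since $N$ is binary, that degree is at most $3$. Hence each vertex of $\vec{G}'$ has in-/out-degree pair in $\{(0,2),(1,1),(1,2),(2,1),(2,0)\}$, and Proposition~\ref{prop:digraph} applies to $\vec{G}'$ to yield
\[ |E(G')|-|V(G')| \;=\; |V_2(\vec{G}')| - |V_0(\vec{G}')|, \]
where $V_i(\vec{G}')$ denotes the set of vertices of in-degree $i$ in $\vec{G}'$. Combining the two displayed inequalities gives $|V_2(\vec{G}')|\ge k + |V_0(\vec{G}')| - c(G')$. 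Because $\vec{G}'$ is acyclic (as a sub-digraph of $\vec{N}$), each of its weakly connected components contains a source, exactly as in the proof of Proposition~\ref{prop:digraph}, so $|V_0(\vec{G}')|\ge c(G')$ and hence $|V_2(\vec{G}')|\ge k$. A vertex of in-degree $2$ in $\vec{G}'$ must have in-degree $2$ in $\vec{N}$ since $2$ is the maximum in-degree in $\vec{N}$, so $V_2(\vec{G}')\subseteq V_R$, which gives $|V_R\cap V(G')|\ge k$ and verifies Hall's condition.

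The main obstacle is the disconnected case of $G'$: the cycle-rank identity only controls $|E(G')|-|V(G')|$ after the number of components has been subtracted, so one needs the per-component source count to absorb the resulting deficit, and this is the only place where the acyclicity of $\vec{N}$ is used.
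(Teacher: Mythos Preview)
Your proof is correct and follows essentially the same route as the paper: set up the bipartite graph between $\mathcal{S}$ and $V_R$ and verify Hall's condition by applying Proposition~\ref{prop:digraph} to the oriented union of the cycles in $\mathcal{S}'$. The only differences are cosmetic: the paper reduces without loss of generality to the case where that union is connected (your per-component source count $|V_0(\vec{G}')|\ge c(G')$ does exactly the same work), and the paper states the cycle-rank relation as an equality where you, more carefully, write the inequality $\ge|\mathcal{S}'|$ that linear independence actually gives and that is all the argument needs.
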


\begin{proof}
Let $\mathcal{S}$ be any cycle basis of $N$ and let $B$ be the bipartite graph defined by $V(B):=\mathcal{S} \sqcup V_R$ and $E(B):=\{(C, v) \in  \mathcal{S} \times V_R \mid  v\in V(C)\}$. 
Here, $|\mathcal{S}| = |V_R|$ holds (note that we could have $|\mathcal{S}| \geq |V_R|$ if $N$ were allowed to have a reticulation with a large in-degree but here $N$ is binary).  
We also see that no vertex of $B$ has degree zero for the following reasons: since $\vec{N}$ is a directed acyclic graph, Lemma \ref{lem:R.position} implies that for each cycle $C \in \mathcal{S}$, there exists at least one reticulation $v \in V_R$ with $v \in V(C)$; conversely, for each reticulation $v\in V_R$, there exists at least one cycle $C \in \mathcal{S}$ with $v\in V(C)$. The proof will be completed if we can show that there is a perfect matching in $B$. We will prove that $B$ satisfies the marriage condition, i.e. for any subset  $\mathcal{S}'$ of $\mathcal{S}$, $|\mathcal{S}^\prime| \leq |V_R^\prime|$ holds, where $V_R^\prime$ is the neighbourhood of $\mathcal{S}^\prime$ in $B$. To prove this, without loss of generality, we may assume that the union of all cycles in $\mathcal{S}^\prime$, denoted by $N^\prime$,  is a connected subgraph of $N$. 

Recalling that $\vec{N}$ is a unique acyclic orientation for $(N, e_\rho, V_R)$, one can convert $N^\prime$ into  a directed  graph $D$ by assigning the same direction to each edge of $N^\prime$ as in $\vec{N}$. Since $D$ is a subgraph of $\vec{N}$, $D$ is also acyclic. By construction, each vertex $u$ of $D$ satisfies 
$(\mathrm{indeg}_D(u), \mathrm{outdeg}_D(u))$$\in \{(0,2), (1,1), (1,2), (2,1), (2,0)\}$. 
If we write $V_2$ for the set $\{u\in V(D) \mid \mathrm{indeg}_D(u) =2\}$, then Proposition \ref{prop:digraph} yields $ |A(D)|-|V(D)|+1 \leq |V_2|$. The left hand side equals $|E(N^\prime)|-|V(N^\prime)|+1 = |\mathcal{S}^\prime| $. Since the in-degree of a vertex in a subgraph $D$ never exceeds its in-degree in the original graph $\vec{N}$, we have $V_2 \subseteq V_R^\prime$. Thus, $|V_2| \leq |V_R^\prime|$ holds.  
Hence, by Hall's marriage theorem, $B$ has a perfect matching. Any perfect matching in $B$ induces a bijection $\phi: \mathcal{S} \to V_R$ satisfying $\phi(C) \in V(C)$ for each $C \in \mathcal{S}$. 
\end{proof}

Theorem \ref{thm:R.cycle} provides a necessary condition for the feasibility of a reticulation placement $V_R$ in Problem \ref{prob:constrained orientation} (binary version), where feasibility means that  $(N, e_\rho, V_R)$ admits an acyclic orientation for some $e_\rho$. 
This implies that we need not consider all $\binom{|V|-|X|}{r}$ reticulation placements in solving Problem \ref{prob:C-orientation}, regardless of the class $C$ we are interested in. Therefore, we may use an arbitrary cycle basis $\mathcal{S}$ of $N$  and choose exactly one reticulation vertex from each of the $r$ basic cycles, thereby reducing the search space for feasible reticulation placements in both   Problems \ref{prob:C-orientation} and  \ref{prob:tree-child orientation}. 

In solving Problem \ref{prob:tree-child orientation}, we will also use the following results. 
From the forbidden structures of tree-child networks (Fig. \ref{fig:forbidden}), Lemma \ref{lem:R.distance} follows immediately, and this leads to Theorem \ref{thm:dist3.TC}.

\begin{lemma}
	    \label{lem:R.distance}
A directed phylogenetic network $N$ is tree-child if every two reticulations of $N$ are \mypink{at a distance of} at least $3$ in the underlying graph of $N$. 
\end{lemma}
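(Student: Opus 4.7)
The plan is to argue by contraposition, using the characterisation of tree-child networks via the two forbidden subgraphs shown in Figure \ref{fig:forbidden}. That is, I would assume that $N$ is \emph{not} tree-child and then exhibit two reticulations in $N$ whose distance in the underlying graph $U(N)$ is at most $2$, thereby contradicting the distance hypothesis.

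First I would invoke the forbidden-subgraph characterisation: a directed binary phylogenetic network fails to be tree-child precisely when it contains either (a) a vertex with two reticulation children, or (b) a reticulation with a reticulation child. So if $N$ is not tree-child, at least one of (a) and (b) occurs.

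Next I would treat the two cases separately and bound the relevant distance in $U(N)$. In case (a), if some vertex $v$ has reticulation children $r_1$ and $r_2$, then $\{v,r_1\}$ and $\{v,r_2\}$ are edges of $U(N)$, giving a path of length $2$ between the two reticulations, so $d_{U(N)}(r_1,r_2)\le 2$. In case (b), if a reticulation $r_1$ has a reticulation child $r_2$, then $\{r_1,r_2\}$ is an edge of $U(N)$, so $d_{U(N)}(r_1,r_2)\le 1$. In either case there exist two reticulations at distance at most $2$ in $U(N)$, contradicting the hypothesis that every pair of reticulations is at distance at least $3$. Hence $N$ must be tree-child.

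There is essentially no serious obstacle here; the statement is an immediate consequence of the forbidden-subgraph characterisation, and the only care needed is to confirm that the pairwise distance bound ``$\ge 3$'' strictly rules out both configurations (distances $1$ and $2$), which it does. I would also briefly remark that the hypothesis that $r_1$ and $r_2$ are distinct reticulations is automatic in both forbidden configurations, so no degenerate case arises.
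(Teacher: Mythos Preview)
Your proof is correct and follows exactly the approach the paper takes: the paper simply states that the lemma follows immediately from the forbidden-subgraph characterisation (Figure~\ref{fig:forbidden}), and your contrapositive argument is precisely the unpacking of that remark.
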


\begin{theorem}
	\label{thm:dist3.TC}
If $(N, e_\rho, \delta_N^-)$ is an instance of Problem \ref{prob:constrained orientation} such that $d_N(u,v) \geq 3$ for any distinct $u, v\in \{v \in V(N) \mid \delta_N^-(v)=2\}$ and if there exists an orientation $\vec{N}$ of $N$ that satisfies the constraint $(e_\rho, \delta_N^-)$, then $\vec{N}$ is a tree-child network.
\end{theorem}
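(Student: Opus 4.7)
My plan is a direct application of Lemma \ref{lem:R.distance} to the orientation $\vec{N}$. The reticulation set of $\vec{N}$ is exactly $V_R := \{v \in V(N) \mid \delta_N^-(v) = 2\}$ by what it means to satisfy the constraint $(e_\rho, \delta_N^-)$, so by hypothesis any two distinct $u, v \in V_R$ satisfy $d_N(u, v) \geq 3$. What remains is to check that this distance bound carries over from $N$ to the underlying graph $U(\vec{N})$.

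The key observation is that $U(\vec{N})$ is obtained from $N$ by a single edge subdivision: the root edge $e_\rho = \{a, b\}$ of $N$ becomes, after orientation and then undirecting, the path $a - \rho - b$, while every other edge is unchanged. I would argue that for any $x, y \in V(N)$ one has $d_{U(\vec{N})}(x, y) \geq d_N(x, y)$, because any walk between $x$ and $y$ in $U(\vec{N})$ can be converted to a walk in $N$ of no greater length by replacing each traversal of the subpath $a - \rho - b$ by the single edge $\{a, b\}$. Applied to distinct reticulations $u, v \in V_R$ of $\vec{N}$, this yields $d_{U(\vec{N})}(u, v) \geq d_N(u, v) \geq 3$, and Lemma \ref{lem:R.distance} then immediately gives that $\vec{N}$ is a tree-child network.

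I do not anticipate a genuine obstacle: the only subtle point is tracking how inserting the root $\rho$ into the edge $e_\rho$ affects pairwise distances in the underlying graph, and this is the familiar fact that a single edge subdivision never decreases the distance between any pair of original vertices. None of the cycle-basis machinery from Theorem \ref{thm:R.cycle}, nor any further structural analysis of $\vec{N}$, is required for this step.
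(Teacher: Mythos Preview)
Your proposal is correct and takes the same approach as the paper: the paper simply states that Lemma~\ref{lem:R.distance} ``leads to'' Theorem~\ref{thm:dist3.TC} without further detail. Your argument in fact supplies more justification than the paper does, explicitly verifying that the root insertion (an edge subdivision) cannot decrease distances between original vertices, so the distance-$3$ hypothesis transfers from $N$ to $U(\vec{N})$.
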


We note that Theorem \ref{thm:dist3.TC} provides a sufficient condition for  tree-child orientability, not a necessary condition. In fact, a tree-child network can contain a pair of reticulations whose distance is less than 3 (see Fig. \ref{fig:R.distance}).

\begin{figure}[h]
    \begin{center}
    \includegraphics[width=0.8\textwidth]{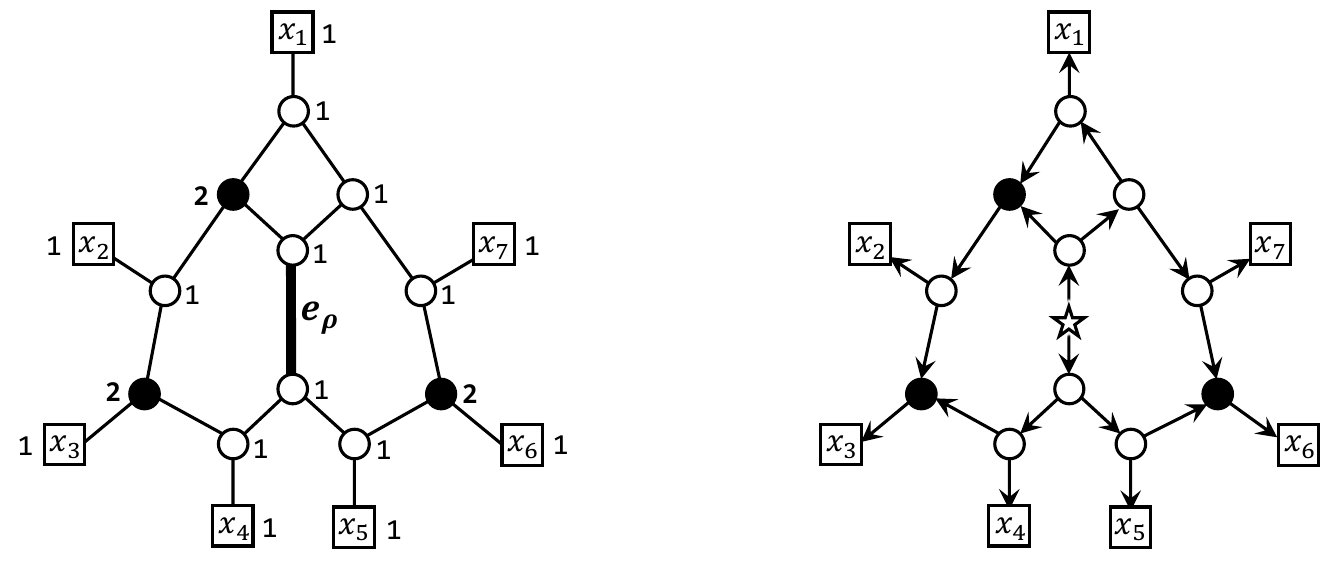}
    \caption{
        Left: An instance $(N, e_\rho, \delta_N^-)$ of Problem \ref{prob:constrained orientation} containing a pair of prescribed reticulations (i.e. vertices of desired in-degree $2$) at distance $2$. The squares and circles are the leaves and internal vertices, respectively. The root edge $e_\rho$ is highlighted in bold. The number next to each vertex $v$ indicates its desired in-degree $\delta_N^-(v)$, and the black vertices are the prescribed reticulations (i.e. those with desired in-degree $2$).
        Right: The orientation $\vec{N}$ for this instance $(N, e_\rho, \delta_N^-)$, which is a tree-child network. The root $\rho$ inserted into $e_\rho$ is shown as a unique star vertex.}
    \label{fig:R.distance}
    \end{center}
\end{figure}

\section{Proposed methods}\label{sec:proposed.methods}
Based on Theorem \ref{thm:R.cycle}, we propose an exact method (Algorithm  \ref{alg:proposed.BF}) for $\mathcal{C}$-\textsc{Orientation} (Problem \ref{prob:C-orientation}) and  a heuristic method (Algorithm \ref{alg:proposed.fast}) for \textsc{Tree-Child Orientation} (Problem \ref{prob:tree-child orientation}). While Theorem \ref{thm:R.cycle} holds for any cycle basis, our algorithms use a `minimal' one for computational efficiency. A cycle basis $\mathcal{S}$ of $N$ is called \emph{minimal} if the sum of the lengths of all cycles in $\mathcal{S}$ is not greater than that of any other cycle basis of $N$, i.e. minimising $\sum_{C\in \mathcal{S}}{|E(C)|}$. Intuitively, focusing on a minimal cycle basis allows us to use basic cycles that do not overlap too much. The problem of computing a minimal cycle basis has been extensively studied (e.g. \cite{stepanets1964basis, deo1982algorithms, horton1987polynomial}; see Chapter 7 of \cite{dePinaPhD} for a brief literature review) and polynomial time algorithms exist (e.g. \cite{dePinaPhD, amaldi2010efficient,horton1987polynomial}). By using an algorithm given in \cite{amaldi2010efficient}, a minimal cycle basis of $N=(V,E)$ can be computed in $O(|V||E|^2/ \log |V|)$.

For both Problems \ref{prob:C-orientation} and \ref{prob:tree-child orientation}, our proposed methods first compute a minimal cycle basis $\mathcal{S}=\{C_1,\dots, C_r\}$ of a given network $N$. Then, for Problem \ref{prob:C-orientation}, our exact method (Algorithm  \ref{alg:proposed.BF}) repeatedly picks exactly one reticulation vertex $v_i$ from each basic cycle $C_i$ to obtain a candidate set $V_R$ of $r$ reticulations, and solves Problem \ref{prob:constrained orientation} for all $(e_\rho, V_R)$ until it finds a $\mathcal{C}$-orientation of $N$. While this algorithm still requires exponential time, it is more efficient than the exponential time method in \cite{huber2024orienting} because the search space for appropriate $V_R$ is smaller than $\binom{O(|V|)}{r}$.

For Problem \ref{prob:tree-child orientation}, we first note that a straightforward way to reduce the search space for $V_R$ is to exclude any candidate sets containing adjacent reticulations, as such configurations are invalid in tree-child networks (recall Fig. \ref{fig:forbidden}, right panel). Beyond this basic constraint, we introduce a novel approach to further reduce the search space based on Theorem \ref{thm:dist3.TC}, which suggests that reticulations should be placed as far apart as possible. This insight leads to our heuristic method (Algorithm \ref{alg:proposed.fast}) that considers only those reticulation sets $V_R$ that maximise the sum of pairwise distances between reticulations. 
While this significant reduction in the search space makes the algorithm faster than Algorithm \ref{alg:proposed.BF}, it may fail to find an existing tree-child orientation; more precisely, while false positives cannot occur, a `NO' output from Algorithm \ref{alg:proposed.fast} should be interpreted as `Probably NO'.
However, Theorem \ref{thm:exact} ensures that Algorithm \ref{alg:proposed.fast} works correctly when $r$ is very small.

\begin{algorithm}
    \caption{Exact FPT Algorithm for $\mathcal{C}$-\textsc{Orientation}}
    \label{alg:proposed.BF}
    \begin{algorithmic}[1]
        \Require An undirected binary phylogenetic network $N = (V, E)$ on $X$
        \Ensure A \mypink{$\mathcal{C}$}-orientation $\vec{N}$ of $N$ if one exists, else `NO'
        \State Compute the number $r:=|E|-|V|+1$ of reticulations $\vec{N}$ must have
        \State Compute a minimal cycle basis $\mathcal{S}=\{C_1,\dots, C_r\}$ of $N$
         \State Compute $S := \{(v_1,\dots,v_r) \in V(C_1) \times \dots \times V(C_r)\}$
        \For{each reticulation placement $s=(v_1, \dots, v_r) \in S$}
            \For{each vertex $v\in V$}
                \State Define the desired in-degree $\delta_N^-(v)$ as $\delta_N^-{(v)}:=2$ if $v \in \{v_1, \dots, v_r\}$ and $\delta_N^-{(v)}:=1$ otherwise
            \EndFor
            \Repeat
                \State Pick any $e\in E$, set $e_\rho:=e$
                \State Run the linear time algorithm for Problem~\ref{prob:constrained orientation} in \cite{huber2024orienting} for the instance $(N, e_\rho, \delta_N^-)$
                \If{the algorithm finds the feasible orientation $\tilde{N}$ for $(N, e_\rho, \delta_N^-)$ and $\tilde{N}$ is in $\mathcal{C}$}
                    \State \Return $\tilde{N}$ as a $\mathcal{C}$-orientation $\vec{N}$ of $N$
                \EndIf
            \Until{no more edges are left in $E$}
        \EndFor
        \State \Return `NO'
    \end{algorithmic}
\end{algorithm}

\begin{algorithm}
    \caption{Heuristic Algorithm for \textsc{Tree-Child Orientation}}
    \label{alg:proposed.fast}
    \begin{algorithmic}[1]
        \Require An undirected binary phylogenetic network $N = (V, E)$ on $X$.
        \Ensure A tree-child orientation $\vec{N}$ of $N$ if found, else `NO'.
        \State Compute the number $r:=|E|-|V|+1$ of reticulations $\vec{N}$ must have
        \State Compute a minimal cycle basis $\mathcal{S}=\{C_1,\dots, C_r\}$ of $N$
        \State Compute $S := \{(v_1,\dots,v_r) \in V(C_1) \times \dots \times V(C_r) \mid d_N(v_i, v_j)\geq 2\}$
        \State Compute $S^\ast := \{s^\ast \in S \mid s^\ast = \argmax_{s\in S}{f(s):=\sum_{1\leq i<j\leq r}{d_N(v_i, v_j)}} \}$
        \For{each reticulation placement $s^\ast = (v_1^\ast, \dots, v_r^\ast) \in S^\ast$}
            \State Define the desired in-degree $\delta_N^-{(v)}$ of all $v\in V$ as $\delta_N^-{(v)}:=2$ if $v\in \{v_1^\ast,\dots, v_r^\ast\}$ and otherwise $\delta_N^-{(v)}:=1$
        \Repeat
    \State Pick any $e\in E$, set $e_\rho:=e$
    \State Run the linear time algorithm for Problem~\ref{prob:constrained orientation} in \cite{huber2024orienting} for the instance $(N, e_\rho, \delta_N^-)$
    \If{the algorithm finds the feasible orientation $\tilde{N}$ for $(N, e_\rho, \delta_N^-)$ and no vertex of $\tilde{N}$ has only reticulations as its children}
        \State \Return $\tilde{N}$  as a tree-child orientation $\vec{N}$ of $N$
    \EndIf
\Until{no more edges are left in $E$}
        \EndFor
\State \Return `NO'
    \end{algorithmic}
\end{algorithm}

\begin{theorem}\label{thm:exact}
If the input $N=(V,E)$ satisfies $r:=|E|-|V|+1 \leq 2$, then  Algorithm \ref{alg:proposed.fast} returns a correct solution to \textsc{Tree-Child Orientation} (Problem \ref{prob:tree-child orientation}).
\end{theorem}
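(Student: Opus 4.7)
The plan is to split on $r \in \{0, 1, 2\}$. For $r = 0$, the network $N$ is a tree, the set $S^\ast$ consists only of the empty tuple, and the algorithm returns any rooting, which is vacuously tree-child since no reticulations are present.

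For $r = 1$, the defining sum in $f$ is empty, so $S^\ast = S = V(C_1)$, and by Theorem \ref{thm:R.cycle} the unique reticulation must lie on the single basic cycle $C_1$ in any feasible orientation; hence the algorithm considers every possible placement. Both tree-child forbidden patterns of Fig.~\ref{fig:forbidden} require at least two reticulations, so a one-reticulation orientation is automatically tree-child, and Theorem \ref{thm:huber.constrained} guarantees that, whenever a tree-child orientation exists, the algorithm encounters a feasible $(e_\rho, v_1)$ and returns a correct orientation.

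The nontrivial case is $r = 2$. Suppose a tree-child orientation $\vec{N}$ of $N$ has reticulation set $\{u_1, u_2\}$, and invoke Theorem \ref{thm:R.cycle} to assume $u_i \in V(C_i)$. If $u_1$ and $u_2$ were adjacent in $N$, one would be the child of the other in $\vec{N}$, creating a reticulation child of a reticulation and contradicting tree-childness. Hence $d_N(u_1, u_2) \geq 2$ and $(u_1, u_2) \in S$. If $(u_1, u_2) \in S^\ast$, the algorithm succeeds directly via Theorem \ref{thm:huber.constrained}. Otherwise the maximum pairwise distance $d^\ast$ on $S$ strictly exceeds $d_N(u_1, u_2) \geq 2$, so $d^\ast \geq 3$, and every pair in $S^\ast$ satisfies the hypothesis of Theorem \ref{thm:dist3.TC}; any feasible acyclic orientation at such a pair is then automatically tree-child.

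The main obstacle is to show, in this last subcase, that some pair in $S^\ast$ actually admits a feasible acyclic orientation. My intended strategy is to argue that feasibility is stable under sliding reticulations along their basic cycles: starting from the feasible pair $(u_1, u_2)$, one shifts $u_i$ along $C_i$ toward a vertex realising the maximum of $f$ while preserving acyclicity. The orientation of each basic cycle is determined up to rotation by the chosen reticulation, so a local rotation does not on its own create a directed cycle. The heart of the argument lies in controlling the interaction at the possibly non-empty intersection $C_1 \cap C_2$, which I would handle by a case split on the structure of that intersection together with careful bookkeeping of arc directions at the shared vertices.
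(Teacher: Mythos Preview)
Your handling of $r \le 1$ is correct and slightly more explicit than the paper, which simply dismisses these cases as obvious.

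For $r = 2$, however, there is a real gap. The crucial step is to show that when $(u_1,u_2) \notin S^\ast$ (so $d^\ast \ge 3$), \emph{some} pair in $S^\ast$ admits a feasible acyclic orientation for an appropriate root edge; only then does Theorem~\ref{thm:dist3.TC} apply. Your proposed sliding argument does not establish this. First, the intuition that ``the orientation of each basic cycle is determined up to rotation by the chosen reticulation'' is inaccurate: in an acyclic orientation a cycle is oriented as two directed paths from a source vertex (where the root path enters the cycle) to the reticulation, not as a cyclic rotation, so moving the reticulation is not a single edge flip and in general forces the source---and possibly the root edge itself---to move as well. Second, even granting that a local move along $C_i$ creates no directed cycle within $C_i$ alone, any new directed cycle must use edges of both $C_1$ and $C_2$, and you explicitly defer the analysis of this interaction (``the heart of the argument'') to an unspecified case split. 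Without that case split actually carried out, the argument is incomplete.

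The paper proceeds differently: rather than sliding from a known feasible pair toward $S^\ast$, it performs a direct structural case analysis on $(C_1, C_2)$. It first splits on whether $C_1$ and $C_2$ are edge-disjoint (then $N$ is level-$1$, the max-distance pair satisfies $d_N(v_1^\ast, v_2^\ast) \ge 3$ because $N$ is binary, and an explicit root placement yields a feasible orientation) or share an edge, and in the latter case further splits on the sizes $|E(C_1)|, |E(C_2)|$ (both equal to $3$: NO instance and the algorithm correctly says so; one equal to $3$ and the other at least $4$: the max-distance pair has distance $2$, but inserting the root on the edge between them pushes the distance to $3$; both at least $4$: the max-distance pair already has distance at least $3$). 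In each subcase an explicit feasible orientation, or its nonexistence, is exhibited directly. In effect, the paper \emph{is} the case split you promise at the end of your proposal---so rather than routing through a sliding lemma whose invariants are unclear, you would do better to analyse the structure of $C_1 \cup C_2$ directly and exhibit the required root and reticulation placement in each case.
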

\begin{proof}
\mypink{Any $N$ with $r\leq 2$ is acyclically orientable. When $r \leq 1$, it is clear that $N$ is tree-child orientable since any orientation of $N$ cannot contain any of the forbidden structures (Fig. \ref{fig:forbidden}). Therefore,} we may focus on the case of $r = 2$. When $r = 2$, $N$ contains exactly two cycles $C_1$ and $C_2$, each of which has at least $3$ edges. 
When $C_1$ and $C_2$ do not share an edge, $N$ is a level-$1$ network \mypink{(i.e., a network such that each biconnected component contains at most one cycle)}.
This implies that $N$ is planar and tree-child orientable (see Fig. \ref{fig:proof.cycles.disjoint}).  Algorithm \ref{alg:proposed.fast} computes a minimal cycle basis $\{C_1, C_2\}$ of $N$, which is unique in this case. Then, Algorithm \ref{alg:proposed.fast} selects a most distant pair $s^\ast=(v_1^\ast, v_2^\ast)$ in $V(C_1)\times V(C_2)$. Since $N$ is binary, $d_N(v_1^\ast, v_2^\ast) \geq 3$. 
We can see that $(N, e_\rho, \delta_N^-)$ has an orientation $\vec{N}$ for \mypink{a} root edge $e_\rho$ \mypink{in the path between $C_1$ and $C_2$, as} shown in Fig.~\ref{fig:proof.cycles.disjoint}.
By Theorem \ref{thm:dist3.TC},  $\vec{N}$ is tree-child. Hence, Algorithm \ref{alg:proposed.fast} can correctly find a tree-child orientation $\vec{N}$ of $N$.  


When $C_1$ and $C_2$ share an edge, they share exactly one \mypink{path $P$} because $r\geq 3$ otherwise. \mypink{Let $k$ be the length of $P$. 
When $k = 1$,} as Fig. \ref{fig:proof.cycles.not.disjoint} indicates, $N$ has a tree-child orientation $\vec{N}$ if and only if at least one of $C_1$ and $C_2$ has $4$ or more edges.  When each of $C_1$ and $C_2$ has exactly $3$ edges as in Fig. \ref{fig:proof.cycles.not.disjoint}(a), Algorithm \ref{alg:proposed.fast} correctly returns `NO'. When $|E(C_1)|=3$ and $|E(C_2)|=4$,  the algorithm selects a most distant pair $s^\ast=(v_1^\ast, v_2^\ast)$ in $V(C_1)\times V(C_2)$ as in Fig.~\ref{fig:proof.cycles.not.disjoint}(b). Although $d_N(v_1^\ast, v_2^\ast) = 2$ holds, the algorithm can insert the root $\rho$ into an appropriate edge $e_\rho \in E$, making their distance from $2$ to $3$. When $|E(C_1)|\geq 4$ and $|E(C_2)| \geq 4$ as in Fig. \ref{fig:proof.cycles.not.disjoint}(c), $d_N(v_1^\ast, v_2^\ast) \geq 3$. Hence, when a tree-child orientation $\vec{N}$ of $N$ exists, Algorithm \ref{alg:proposed.fast} correctly outputs it.

\mypink{Likewise, when $k = 2$, we can see that $N$ has a tree-child orientation $\vec{N}$ if and only if at least one of $C_1$ and $C_2$ has $5$ or more edges  \mypink{(see Fig.\ \ref{fig:proof.cycles.not.disjoint}(d)(e))} and that Algorithm \ref{alg:proposed.fast} outputs a correct solution $\vec{N}$ if it exists, similarly to the previous case.}

\mypink{
Suppose $k \geq 3$. Since $\{C_1, C_2\}$ is a minimal cycle basis of $N$, the lengths of the two paths other than $P$ connecting the two endpoints of $P$ are both $k$ or more. Then, as Algorithm \ref{alg:proposed.fast} selects a most distant reticulation pair $(v_1^\ast, v_2^\ast)$ in $V(C_1)\times V(C_2)$,  we have $d_N(v_1^\ast, v_2^\ast) \geq 3$. This implies that in any oriented network output by Algorithm \ref{alg:proposed.fast}, the two reticulations are at a distance $3$ or more. Hence, by Lemma \ref{lem:R.distance}, if a tree-child orientation $\vec{N}$ of $N$ exists, Algorithm \ref{alg:proposed.fast} correctly outputs it.}
\end{proof}
  
\begin{figure}[htbp]
    \begin{center}
    \includegraphics[width=0.5\textwidth]{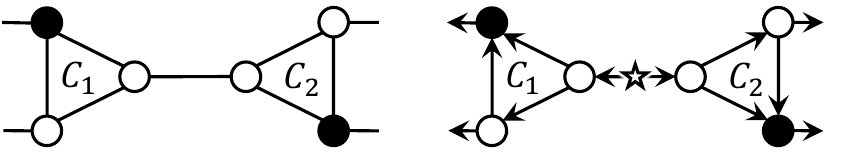}
    \caption{
    Proof of Theorem \ref{thm:exact} (the case when two cycles are edge-disjoint). The star is the root $\rho$. The black vertices are a pair of reticulations, $v_1^\ast$ and $v_2^\ast$, that maximises the sum of their distances. 
  \label{fig:proof.cycles.disjoint}
    }
      \end{center}
\end{figure}

\begin{figure}[htbp]
    \begin{center}
    \includegraphics[width=0.9\textwidth]{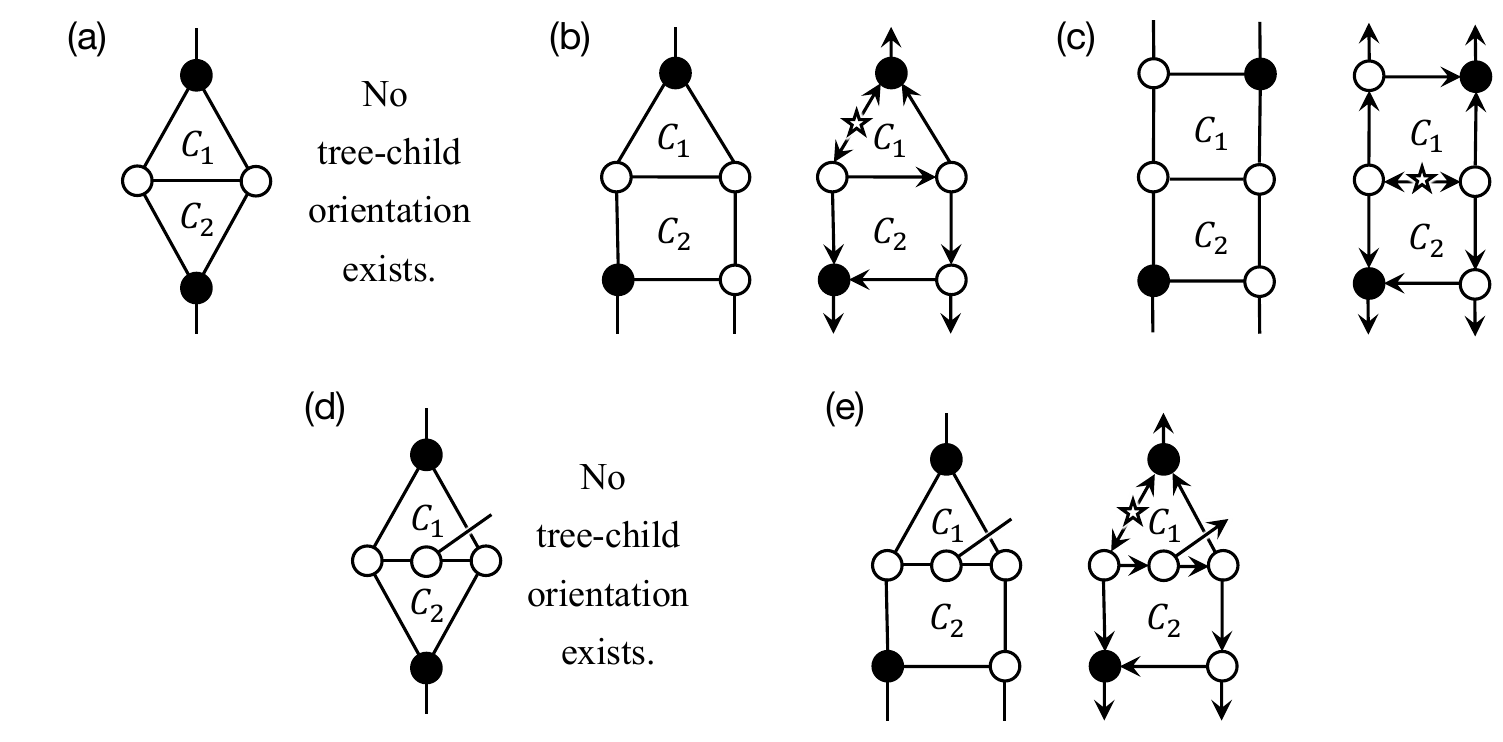}
    \caption{Proof of Theorem \ref{thm:exact} (\mypink{illustrating the cases where} \mypink{the} two cycles \mypink{share a path of length $k\in \{1,2\}$}). The star is the root $\rho$. The black vertices are a pair of reticulations, $v_1^\ast$ and $v_2^\ast$, that maximises the sum of their distances in each case. \mypink{For $k=1$:} (a) When both $C_1$ and $C_2$ are $3$-cycles, $N$ must be a NO instance; (b) \mypink{and} (c) \mypink{are} YES instances. \mypink{For $k=2$: (d) When both $C_1$ and $C_2$ are $4$-cycles, $N$ must be a NO instance; (e) is a YES instance.}
\label{fig:proof.cycles.not.disjoint}
        }
    \end{center}
\end{figure}

Theorem \ref{thm:complexity} states that Algorithm \ref{alg:proposed.BF} is FPT both in the reticulation number $r$ and the size $c$ of \mypink{a} longest basic \mypink{cycle} in $\mathcal{S}$ used in the computation.

\begin{theorem}\label{thm:complexity}
	Algorithm \ref{alg:proposed.BF} solves \textsc{Tree-Child Orientation} (Problem \ref{prob:tree-child orientation}) in $O(c^r \cdot |V||E|) =O(c^r \cdot |V|^2)$ time, 
	where $r$ is the reticulation number of $N=(V,E)$ and $c$ is the size of \mypink{a} largest \mypink{cycle} in a minimal cycle basis $\mathcal{S}$ chosen at Line 2.
\end{theorem}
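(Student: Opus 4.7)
The plan is to walk through Algorithm \ref{alg:proposed.BF} line by line, bound each block separately, and then combine the estimates using the binary-graph inequality $|E|=O(|V|)$ to pass between the two forms of the stated bound. I would split the work into a preprocessing phase (Lines 1--3) and the outer enumeration loop (Lines 4--16).

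For preprocessing, Line 1 is an $O(|V|+|E|)$ scan; Line 2 computes a minimal cycle basis in $O(|V||E|^2/\log|V|)$ time via the algorithm of \cite{amaldi2010efficient} recalled just before the statement of Algorithm \ref{alg:proposed.BF}; Line 3 implicitly records the Cartesian product $V(C_1)\times\cdots\times V(C_r)$, whose cardinality is at most $c^r$ by definition of $c$. For the main loop, the outer \textbf{for} runs at most $|S|\le c^r$ times. At each outer iteration, writing out $\delta_N^-$ takes $O(|V|)$ time, and the inner \textbf{repeat--until} block tries up to $|E|$ choices of root edge. For each such choice, Theorem \ref{thm:huber.constrained} supplies an $O(|E|)$ time procedure for Problem \ref{prob:constrained orientation} that either returns the unique feasible orientation $\tilde N$ or reports infeasibility, and the tree-child test on $\tilde N$ reduces to a single $O(|V|+|E|)$ sweep of the arc set looking for the two forbidden local patterns of Fig.\ \ref{fig:forbidden}. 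Thus each outer iteration costs $O(|E|(|E|+|V|))=O(|E|^2)$, and the outer loop contributes $O(c^r\cdot|E|^2)$ in total.

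Combining the two contributions and using the fact that every vertex of a binary phylogenetic network has degree at most three, so that $|E|\le\tfrac{3}{2}|V|$, the main-loop cost becomes $O(c^r\cdot|V||E|)=O(c^r\cdot|V|^2)$; the preprocessing term is polynomial in $|V|$ and does not disturb this FPT bound. The only delicate point in the analysis is ensuring that the tree-child membership test is \emph{truly} linear, so that it does not inflate the per-placement cost by a factor that would spoil the $|E|^2$ estimate; this however follows cleanly from the purely local nature of the two forbidden configurations (a reticulation whose child is a reticulation, or a vertex whose two children are both reticulations), each detectable by inspecting a constant-size neighbourhood of every vertex. Everything else in the argument is routine accounting.
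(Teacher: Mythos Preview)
Your proposal is correct and follows essentially the same line-by-line accounting as the paper's own proof: bound the minimal cycle basis computation by $O(|V||E|^2/\log|V|)$, bound the outer loop by $|S|\le c^r$ iterations each costing $O(|V||E|)$ (your $O(|E|^2)$ is equivalent since $|V|=\Theta(|E|)$ for binary networks), and use $|E|=O(|V|)$ to obtain the final $O(c^r\cdot|V|^2)$ form. Your added remarks on why the tree-child test is linear are a welcome elaboration but do not change the argument.
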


\begin{proof}
Recalling a minimal cycle basis $\mathcal{S}$ of $N$  can be computed in  $O(|V||E|^2/ \log |V|)$  time by the algorithm in \cite{amaldi2010efficient}, we know that Line 2 takes $O(|V||E|^2/ \log |V|)$ time. 
 Line 5--7 takes $O(|V|)$ time for each $s\in S$.  Also, since one can check in $O(|V|)$ time whether or not $\tilde{N}$ is in the class $\mathcal{C}$ of tree-child networks, Line 8--14 takes $O(|E|\times (|V|+|V|))=O(|V||E|)$ time for each $s\in S$. Line 5--14 needs to be repeated $|S|=O(|V(C_1)|\times \dots \times |V(C_r)|) = O(c^r)$ times. Therefore, Line 4--15 can be done in $O(c^r \cdot |V||E|)$ time. 
Since $N$ is binary, $O(|V|)=O(|E|)$ holds. Thus, Algorithm \ref{alg:proposed.BF} runs in  $O(c^r\cdot |V|^2)$ time.
\end{proof}

Theorem \ref{thm:complexity} implies that the unparameterised worst-case complexity of Algorithm~\ref{alg:proposed.BF} is $O(|V|^{r+2})$. This complexity is equivalent to that of the exponential time method (Algorithm~2 in \cite{huber2024orienting}) that performs $O(|V||E|)$ time calculations for all $\binom{|V|}{r}=O(|V|^r)$ reticulation placements. Although two FPT algorithms for a special case of $\mathcal{C}$-\textsc{Orientation} were proposed in \cite{huber2024orienting}, Algorithm \ref{alg:proposed.BF} differs in its parameterisation from them. Specifically, our Algorithm \ref{alg:proposed.BF} is parameterised by $r$ and $c$, while the FPT algorithms in \cite{huber2024orienting} are parameterised by $r$ and by the level of $N$, respectively.
%
%

Theorem \ref{thm:complexity} also shows that the size of search space depends on the choice of $\mathcal{S}$ at Line 2 of Algorithm \ref{alg:proposed.BF} although $c$ never exceeds the size of longest cycles in $N$ (the same applies to Line~2 of Algorithm \ref{alg:proposed.fast}). To illustrate this, consider two minimal cycle bases $\mathcal{S}=\{C_1, C_2, C_3\}$ with $|V(C_1)|=|V(C_2)|=|V(C_3)|=4$ and $\mathcal{S}^\prime=\{C_1^\prime, C_2^\prime, C_3^\prime\}$ where $|V(C_1^\prime)|=3$, $|V(C_2^\prime)|=4$ and $|V(C_3^\prime)|=5$ (note that they have the same total length $4+4+4=3+4+5$). When the former $\mathcal{S}$ is selected, the number of elements of $S$ at Line 3 of either algorithm is $|V(C_1)|\times |V(C_2)|\times |V(C_3)|=4^3$, whereas $|S|=|V(C_1^\prime)|\times |V(C_2^\prime)|\times |V(C_3^\prime)|=3\times 4\times 5$ for the latter $\mathcal{S}^\prime$.


\section{Experiments}\label{sec:experiments}
We implemented our two proposed methods (\mypink{the \textsc{Tree-Child Orientation} version of} Algorithm \ref{alg:proposed.BF}, and Algorithm \ref{alg:proposed.fast}) and the existing exponential time algorithm described by Huber {\textit{et al.}}\ (Algorithm 2 in \cite{huber2024orienting}) using Python 3.11.6. 
In the implementation of Algorithms \ref{alg:proposed.BF} and \ref{alg:proposed.fast}, we used the \texttt{minimum\_cycle\_basis} function from the Python package \texttt{networkx} to compute a minimal cycle basis. We note that the algorithm implemented in \texttt{networkx} is not the $O(|V||E|^2/ \log |V|)$ algorithm given in \cite{amaldi2010efficient} but the $O(|E|^3+|E||V|^2\log{|V|})$ algorithm given in Section 7.2 of \cite{dePinaPhD}.
Undirected binary phylogenetic networks were generated as test data using a method described in Appendix.  
The source code, test data, and the program used to generate the data are available at \url{https://github.com/hayamizu-lab/tree-child-orienter}. The full details of the results can be found at \url{https://github.com/hayamizu-lab/tree-child-orienter/tree/main/results}.


\subsection{Experiment 1: Execution time}
\experimentlabel{expt:time}
In Experiment \ref{expt:time}, we compared the execution times of the following methods: the existing exponential time algorithm (Algorithm 2 in \cite{huber2024orienting}),  Algorithm \ref{alg:proposed.BF}, and Algorithm~\ref{alg:proposed.fast} on 20 hand-picked tree-child orientable networks with 10 leaves. The 20 networks consisted of five samples each for reticulation numbers $r=2,3,4$ and $5$. Due to the time-consuming nature of the existing method, conducting experiments with a larger number of samples was infeasible. The experiment was performed on a MacBook Air (CPU: Intel Core i5, 1.6GHz, 8GB memory). We note that the hardware specification in Experiment \ref{expt:time} differs from Experiments \ref{expt:small_graph} and \ref{expt:large_graph}, but this does not undermine the validity of this study as we do not compare results across experiments.

The results are summarised in Fig. \ref{fig:barchart}. 
Although both Algorithm 2 in \cite{huber2024orienting} and Algorithm~\ref{alg:proposed.BF} require exponential time in general, Algorithm \ref{alg:proposed.BF} is expected to be faster in practice due to its smaller search space. Indeed, Algorithm \ref{alg:proposed.BF} was significantly faster than the existing exponential time method. We also confirmed that Algorithm \ref{alg:proposed.fast} was faster than Algorithm  \ref{alg:proposed.BF}.

\begin{figure}[htbp]
    \begin{center}
\includegraphics[width=.85\textwidth]{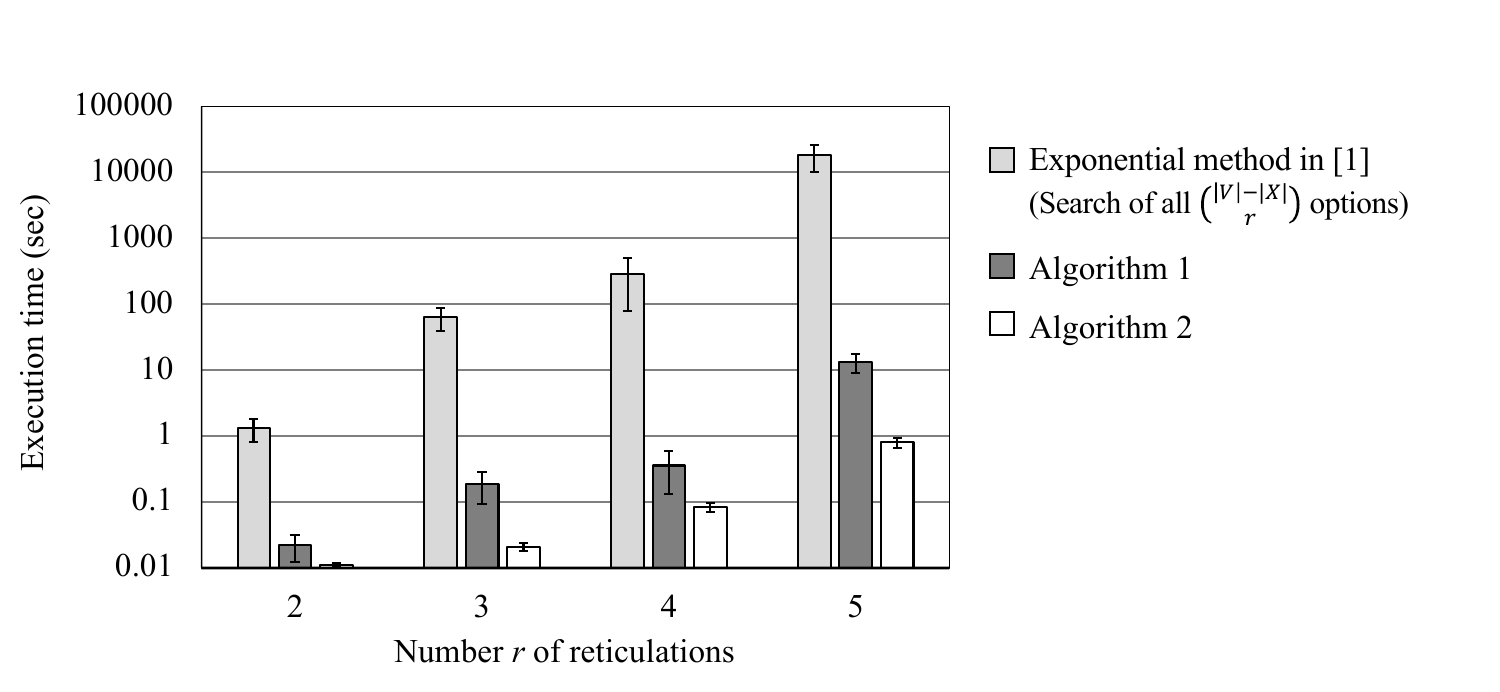}
        \caption{Results of Experiment \ref{expt:time}. Comparison of the execution time of the three methods for tree-child orientable networks with $10$ leaves. Our exponential time algorithm (Algorithm \ref{alg:proposed.BF}) has been shown to be significantly faster than the existing exponential time algorithm described in \cite{huber2024orienting}.
    \label{fig:barchart}
    }
    \end{center}
\end{figure}

\subsection{Experiment 2: Accuracy for small graphs}
\experimentlabel{expt:small_graph}
In Experiment \ref{expt:small_graph}, we evaluated the accuracy and execution time of Algorithms \ref{alg:proposed.BF} and \ref{alg:proposed.fast} on the 289 networks with 10 leaves and 1 to 5 reticulations in Table \ref{table:materials}. The experiment was performed on a MacBook Pro (CPU: Apple M2 Pro, clock speed 3.49GHz, memory 16GB). The tree-child orientability of the sample graphs was determined using the existing exponential time algorithm (Algorithm 2 in \cite{huber2024orienting}). Out of 289 instances, 268 were tree-child orientable (YES instances) and 21 were not (NO instances).

The accuracy and execution time of the two methods are summarised in Table \ref{table:accuracy}.
The correctness of Algorithm \ref{alg:proposed.BF} was empirically confirmed, while being much faster than the existing method in \cite{huber2024orienting}. In accordance with Theorem \ref{thm:exact}, Algorithm \ref{alg:proposed.fast} returned a correct solution whenever $r\leq 2$, but in practice, it still worked correctly for most cases with  $r \leq 4$, with a much shorter running time than Algorithm \ref{alg:proposed.BF} for both YES and NO instances. Algorithm \ref{alg:proposed.fast} often failed to find a tree-child orientation for YES instances with $r = 5$. 

\subsection{Experiment 3: Accuracy for large graphs}
\experimentlabel{expt:large_graph}
In Experiment \ref{expt:large_graph}, we evaluated the performance of Algorithms \ref{alg:proposed.BF} and \ref{alg:proposed.fast} on the 471 larger sample networks with 20 leaves and 1 to 9 reticulations in Table \ref{table:materials}. 
The experiment was performed on the same MacBook Pro used in Experiment \ref{expt:small_graph}. Since Algorithm 2 in \cite{huber2024orienting} became \mypink{intractable} for most cases with $r \geq 6$, we compared Algorithms \ref{alg:proposed.BF} and \ref{alg:proposed.fast} based on the number of tree-child orientations found and the time taken to find one. 

The results are summarised in Table \ref{table:TCfound}.
As reticulation number $r$ increased, the ability of Algorithm \ref{alg:proposed.fast} to find tree-child orientations decreased monotonically. By contrast, Algorithm \ref{alg:proposed.BF} was still able to find a tree-child orientation for many instances with large $r$. It generally takes a long time, but sometimes it can find a tree-child orientation in a practical time.

\begin{table}[htbp]
    \caption{
        Results of Experiment \ref{expt:small_graph}. The input networks have $10$ leaves.  \#YES (resp.\ \#NO) is the number of tree-child orientable (resp.\ non-tree-child orientable) instances among the generated networks with each reticulation number $r$. \#YES and \#NO have been verified using the existing exponential time algorithm in \cite{huber2024orienting}. The execution time here is the time taken to find a tree-child orientation or to output `NO' or `Probably NO'.  
         \label{table:accuracy}
        }
    \begin{tabular}{crcrrrcrrr}
        \hline
         &  & \multicolumn{4}{c}{Algorithm \ref{alg:proposed.BF}} & \multicolumn{4}{c}{Algorithm \ref{alg:proposed.fast}} \\ \hline
        \multirow{2}{*}{$r$} & \#YES & \multirow{2}{*}{Accuracy} & \multicolumn{3}{c}{Execution Time (sec)} & \multirow{2}{*}{Accuracy} & \multicolumn{3}{c}{Execution Time (sec)} \\
         & \#NO &  & Mean & Min & Max &  & Mean & Min & Max \\ \hline
         \multirow{2}{*}{1} & 170 & \begin{tabular}[c]{@{}c@{}}170/170\\ (100\%)\end{tabular} & 0.002 & 0.001 & 0.004 & \begin{tabular}[c]{@{}c@{}}170/170\\ (100\%)\end{tabular} & 0.002 & 0.001 & 0.004 \\
         & 0 & N/A & N/A & N/A & N/A & N/A & N/A & N/A & N/A \\ \hline
        \multirow{2}{*}{2} & 52 & \begin{tabular}[c]{@{}c@{}}52/52\\ (100\%)\end{tabular} & 0.011 & 0.002 & 0.063 & \begin{tabular}[c]{@{}c@{}}52/52\\ (100\%)\end{tabular} & 0.004 & 0.002 & 0.018 \\
         & 5 & \begin{tabular}[c]{@{}c@{}}5/5\\ (100\%)\end{tabular} & 0.062 & 0.061 & 0.064 & \begin{tabular}[c]{@{}c@{}}5/5\\ (100\%)\end{tabular} & 0.013 & 0.007 & 0.036 \\ \hline
        \multirow{2}{*}{3} & 24 & \begin{tabular}[c]{@{}c@{}}24/24\\ (100\%)\end{tabular} & 0.081 & 0.004 & 0.497 & \begin{tabular}[c]{@{}c@{}}24/24\\ (100\%)\end{tabular} & 0.007 & 0.005 & 0.012 \\
         & 7 & \begin{tabular}[c]{@{}c@{}}7/7\\ (100\%)\end{tabular} & 0.539 & 0.232 & 1.173 & \begin{tabular}[c]{@{}c@{}}7/7\\ (100\%)\end{tabular} & 0.021 & 0.011 & 0.047 \\ \hline
        \multirow{2}{*}{4} & 17 & \begin{tabular}[c]{@{}c@{}}17/17\\ (100\%)\end{tabular} & 0.741 & 0.008 & 3.735 & \begin{tabular}[c]{@{}c@{}}16/17\\ (94\%)\end{tabular} & 0.026 & 0.013 & 0.056 \\
         & 6 & \begin{tabular}[c]{@{}c@{}}6/6\\ (100\%)\end{tabular} & 3.918 & 2.428 & 7.650 & \begin{tabular}[c]{@{}c@{}}6/6\\ (100\%)\end{tabular} & 0.048 & 0.020 & 0.112 \\ \hline
        \multirow{2}{*}{5} & 4 & \begin{tabular}[c]{@{}c@{}}4/4\\ (100\%)\end{tabular} & 14.382 & 1.481 & 48.980 & \begin{tabular}[c]{@{}c@{}}1/4\\ (25\%)\end{tabular} & 0.153 & 0.133 & 0.173 \\
         & 4 & \begin{tabular}[c]{@{}c@{}}4/4\\ (100\%)\end{tabular} & 23.340 & 13.892 & 39.959 & \begin{tabular}[c]{@{}c@{}}4/4\\ (100\%)\end{tabular} & 0.115 & 0.088 & 0.156 \\ \hline
        \end{tabular}
    \end{table}


\begin{table}[htbp]
\centering
\caption{
        Results of Experiment \ref{expt:large_graph}. The input networks have $20$ leaves. \#Graph is the number of generated networks with each reticulation number $r$. \#YES is the number of YES instances among those networks, which is equal to the number of tree-child orientations found by Algorithm \ref{alg:proposed.BF}.  \#TC is the number of tree-child orientations found by Algorithm \ref{alg:proposed.fast}. The execution time here is the time taken to find a tree-child orientation.  
        \label{table:TCfound}
}
{\footnotesize
\begin{tabular}{crcrrrcrrr}
\hline
             &  & \multicolumn{4}{c}{Algorithm \ref{alg:proposed.BF}} & \multicolumn{4}{c}{Algorithm \ref{alg:proposed.fast}} \\ \hline
            \multirow{2}{*}{$r$} & \multirow{2}{*}{\#Graph} & \multirow{2}{*}{\#YES} & \multicolumn{3}{c}{Execution time (sec)} & \multirow{2}{*}{\#TC} & \multicolumn{3}{c}{Execution Time (sec)} \\
             &  &  & Mean & Min & Max &  & Mean & Min & Max \\ \hline
            1 & 163 & 163 & 0.005 & 0.004 & 0.025 & 163 & 0.005 & 0.003 & 0.025 \\
            2 & 108 & 107 & 0.032 & 0.006 & 0.152 & 107 & 0.009 & 0.006 & 0.064 \\
            3 & 77 & 75 & 0.323 & 0.010 & 2.312 & 74 & 0.019 & 0.010 & 0.081 \\
            4 & 49 & 44 & 2.352 & 0.014 & 13.404 & 40 & 0.067 & 0.031 & 0.122 \\
            5 & 23 & 18 & 11.485 & 0.031 & 52.339 & 15 & 0.414 & 0.125 & 1.452 \\
            6 & 20 & 13 & 226.701 & 0.026 & 966.848 & 5 & 3.360 & 1.607 & 5.327 \\
            7 & 16 & 12 & 2732.563 & 1.190 & 23761.347 & 4 & 31.745 & 11.619 & 61.007 \\
            8 & 10 & 7 & 23155.714 & 1115.183 & 67313.690 & 1 & 942.973 & 942.973 & 942.973 \\
            9 & 5 & 3 & 229788.111 & 20176.763 & 623017.346 & 0 & N/A & N/A & N/A \\ \hline
\end{tabular}
}
\end{table}

\section{Discussion}\label{sec:discussion}
\mypink{As we have seen in Section~\ref{sec:experiments}, the heuristic method (Algorithm~\ref{alg:proposed.fast}) is significantly faster than the exact FPT method (Algorithm~\ref{alg:proposed.BF}), but it suffers from an increasing rate of false negatives as the reticulation number $r$ grows. We therefore examine the theoretical reasons behind its limitations and explore its potential biological applications.}

\subsection{Theoretical limitations of Algorithm \ref{alg:proposed.fast}}
Algorithm \ref{alg:proposed.fast} is very fast for both YES and NO instances, but interestingly, it becomes inaccurate as the reticulation number $r$ increases. It would be useful to analyse the possible causes of such failures. 

The first remark is that  the algorithm does not necessarily find a reticulation placement that maximises the sum of the pairwise distances, because it searches for an optimal placement for a fixed minimal cycle basis $\mathcal{S}$. More precisely, the choice of $\mathcal{S}$ can affect the maximum value  $f(s^\ast)$  of the objective function. For example, when the algorithm selects the minimal cycle basis $\mathcal{S}=\{C_1,\dots, C_8\}$ as shown on the left of Fig. \ref{fig:distsum_cyclebasis_difference}, an optimal reticulation placement $s^\ast$ attains $f(s^\ast)=130$. 
On the other hand, when the cycle $C_8$ is replaced as shown on the right, then $f(s^\ast)=129$. 
This observation suggests that if our goal is to maximise the sum of the distances between reticulations, then we need to carefully select a minimal cycle basis $\mathcal{S}$. 


\begin{figure}[htbp]
    \begin{center}
    \includegraphics[width=1.0\textwidth]{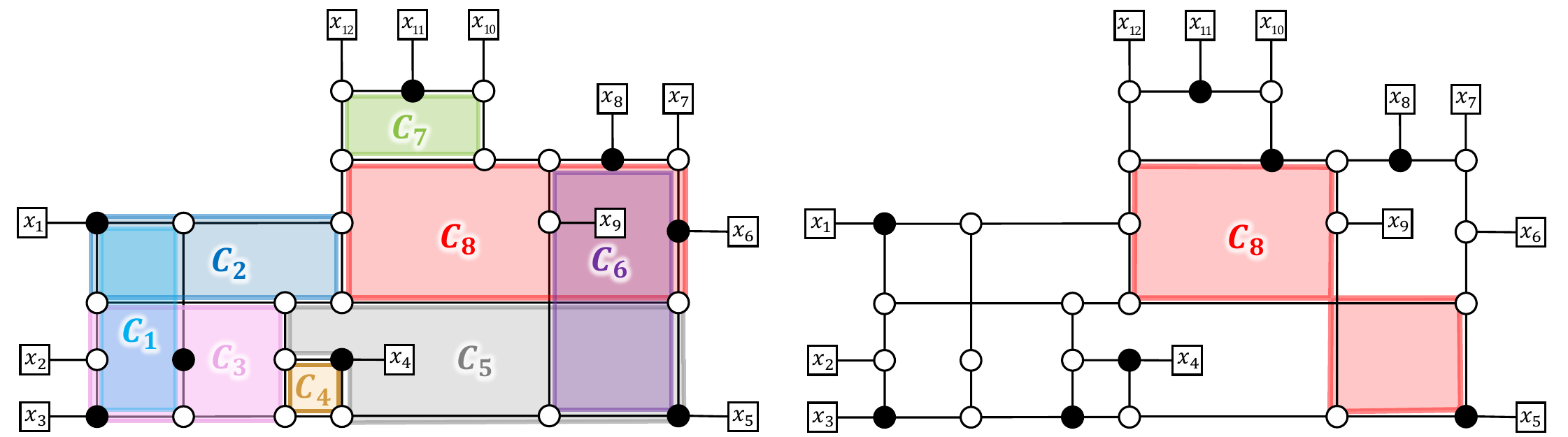}
    \caption{Discussion of the objective function $f$ of Algorithm \ref{alg:proposed.fast}. This network $N$ has multiple minimal cycle bases $\mathcal{S}$ with different values of $f(s^\ast)$. Different basic cycles are highlighted in different colours, and the black vertices indicate an optimal reticulation placement for each minimal cycle basis.  The reticulation set in the left figure achieves $f(s^*)=130$.  On the other hand, the placement in the right figure, where $C_8$ is replaced and the remaining cycles are the same, only attains $f(s^*)=129$. 
    }
    \label{fig:distsum_cyclebasis_difference}
    \end{center}
\end{figure}

However, more importantly, we note that maximising the sum of the pairwise distances of the reticulations is not always advantageous for finding a tree-child orientation. 
For example, the graph on the left of Fig. \ref{fig:failed.case} has a tree-child orientation if the four reticulations are placed as shown on the right. 
However, by maximising the sum of the distances of the reticulations, Algorithm \ref{alg:proposed.fast} has to select the reticulation placement with $f(s^\ast)=20$ as on the left of Fig. \ref{fig:failed.case}. Then, the algorithm will end up with returning `NO' because there is no tree-child orientation for this reticulation placement, regardless of the choice of root edge $e_\rho$. 
The placement on the right has $f(s)=19$, which is not maximum but does allow for a tree-child orientation. 

\begin{figure}[htbp]
    \begin{center}
    \includegraphics[width=0.7\textwidth]{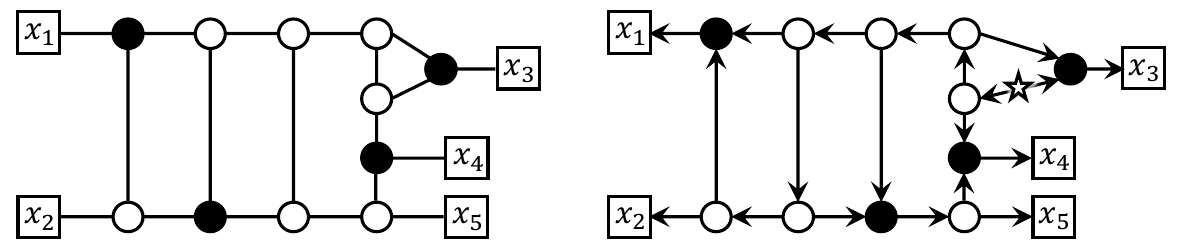}
    \caption{
        An example of $N$ that has a tree-child orientation but for which Algorithm \ref{alg:proposed.fast} fails. For any $s^\ast \in S^\ast$ maximising $f$ with $f(s^\ast)=20$,  $(N, e_\rho, \delta_N^-)$ is not tree-child orientable, regardless of the choice of root edge $e_\rho$. However, choosing $s\in S\setminus S^\ast$ as shown on the right, with $f(s)=19$, yields a tree-child orientable $(N, e_\rho, \delta_N^-)$.
        }
        \label{fig:failed.case}
    \end{center}
\end{figure}

\subsection{Biological application of Algorithm \ref{alg:proposed.fast}}\label{sec:bio}
While the biological implications of $\mathcal{C}$-\textsc{Orientation} methods require further investigation, we present here a case study applying our heuristic to a network in the biological literature. We emphasise that our focus is not on validating the biological accuracy, but rather on identifying challenges and areas for potential improvement.

Fig. \ref{fig:realdata}(a) shows a part of the biologically estimated tree-child network representing the evolutionary history of eight wheat relatives \cite{wheat}. Note that the original network proposed in \cite{wheat} is more complex, and we have ignored the uncertain reticulation arcs and selected some of the most likely reticulation arcs to create a tree-child network. Given the unrooted and undirected form of this network, Algorithm \ref{alg:proposed.fast} successfully determined it as tree-child orientable. However, the algorithm could not recover the same network as in Fig. \ref{fig:realdata}(a) and instead produced an alternative tree-child orientation with a different root position, such as the one shown in Fig. \ref{fig:realdata}(b).

It is true that the networks differ in their deep ancestral reticulation histories due to the alternative root positions, but they agree with more recent important reticulation events explained in \cite{wheat}, namely the introgression of the \textit{Sitopsis} ancestor by the \textit{Ae. speltoides} ancestor and the complex gene flows from \textit{Ae. tauschii} to \textit{Ae. caudata}. Notably, despite the difference in the root position, most arc orientations remain consistent with the original network. This observation underscores the importance of allowing root flexibility to find desired orientations that might otherwise be overlooked.

While we have seen the similarity between the networks, the result also shows a significant limitation of the heuristic. Indeed, even for the root placement shown in Fig. \ref{fig:realdata}(b), there exists a more accurate tree-child orientation that preserves the original reticulation structure, which can be obtained by reversing the red arc. The failure of Algorithm \ref{alg:proposed.fast} to identify the better solution stems from its current objective function, which prioritises maximising the sum of distances between reticulations. This observation further highlights the challenge of developing biologically meaningful orientation heuristics.

\begin{figure}[htbp]
\begin{center}
\includegraphics[width=1.0\textwidth]{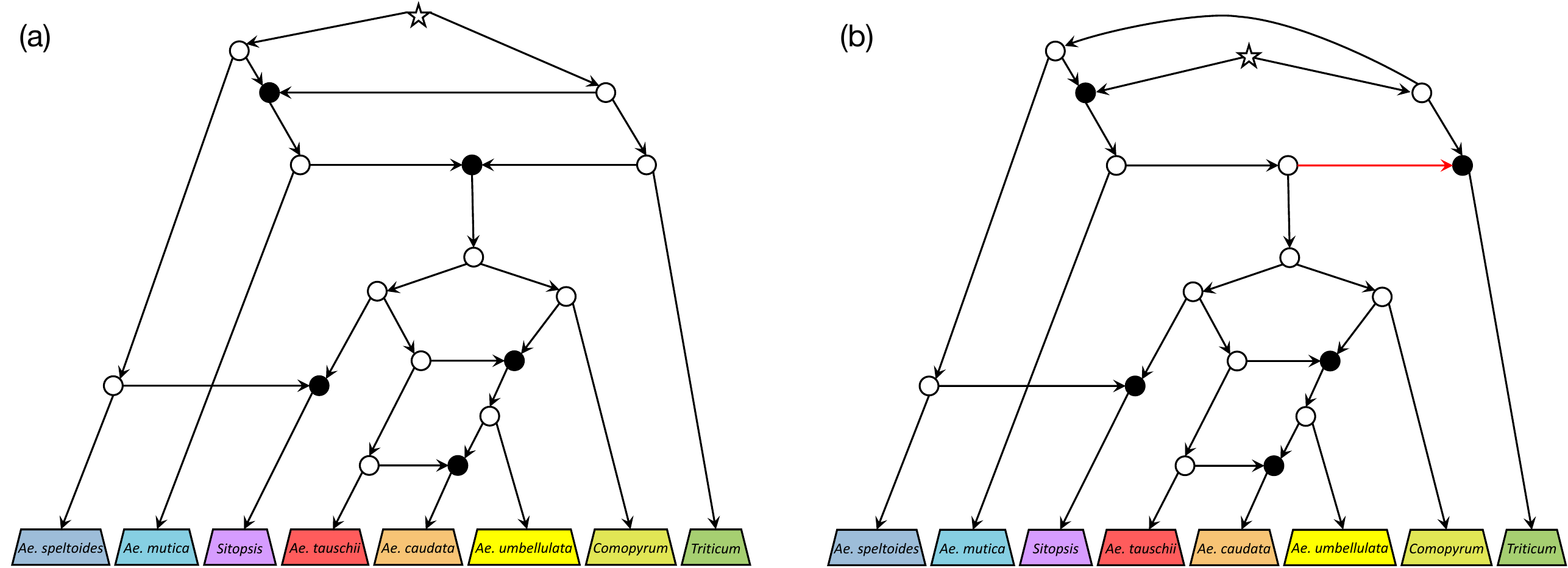}
\caption{Application of Algorithm \ref{alg:proposed.fast} to a biologically estimated network. (a) A tree-child network (created by a slight modification of Fig. 5 in \cite{wheat}) representing an evolutionary scenario for wheat relatives (from left to right: \textit{Ae. speltoides}, \textit{Ae. mutica}, \textit{Sitopsis}, \textit{Ae. tauschii}, \textit{Ae. caudata}, \textit{Ae. umbellulata}, \textit{Comopyrum} and \textit{Triticum}). (b) One of the possible outputs of Algorithm \ref{alg:proposed.fast} when applied to the unrooted and undirected version of network (a). The networks differ in root edge selection and the orientation of one edge (highlighted in red). Star vertices indicate roots, and black vertices denote nodes with in-degree two.}
\label{fig:realdata}
\end{center}
\end{figure}

\section{Conclusion and future work}\label{sec:conclusion.future.work}
The \mypink{$\mathcal{C}$}-\textsc{Orientation} problem, which asks whether a given undirected binary phylogenetic network can be oriented to a directed phylogenetic network of a desired class \mypink{$\mathcal{C}$}, is an important computational problem in phylogenetics. 
The complexity of this problem remains unknown for many network classes \mypink{$\mathcal{C}$}, including the class of binary tree-child networks \cite{bulteau2023turning, docker2025}. 
A simple exponential time algorithm for $\mathcal{C}$-\textsc{Orientation} was provided in \cite{huber2024orienting}. FPT algorithms for a special case of the problem were also proposed in \cite{huber2024orienting}, but their practical application is limited due to the intricate nature of the procedures and the challenges in implementation, despite the constraints imposed on $\mathcal{C}$. Additionally, no study has explored heuristic approaches to solve $\mathcal{C}$-\textsc{Orientation} in practice, even for a particular class $\mathcal{C}$ such as tree-child networks.  

In this paper, we have proposed a simple, easy to implement, practical exact FPT algorithm  (Algorithm~\ref{alg:proposed.BF}) for $\mathcal{C}$-\textsc{Orientation} and a  heuristic algorithm for \textsc{Tree-Child Orientation} (Algorithm \ref{alg:proposed.fast}) based on Theorem \ref{thm:R.cycle}. They improve the search space of the existing simple exponential  time algorithm by using a cycle basis to reduce the number of possible reticulation placements. Our experiments showed that Algorithm~\ref{alg:proposed.BF} is significantly faster in practice than a state-of-the-art exponential time  algorithm in \cite{huber2024orienting}. Algorithm \ref{alg:proposed.fast} is even faster, with a trade-off between the accuracy and the  reticulation number. Further research using larger and more diverse datasets could provide more insight into the strengths and limitations of the proposed methods. Their usefulness and effectiveness should also be tested in real-world data analysis. As tree-child networks can describe evolutionary histories that do not involve frequent reticulation events, one can find various tree-child networks in the literature other than the one we used in Section \ref{sec:bio}, such as a hybridisation network of bread wheat (Fig. 3 in \cite{TC.wheat2014}),  a hybridisation network of mosquitoes (Fig. 15 in \cite{TC.mosquito}) and an admixture network of diverse populations (Fig. 3 in \cite{TC.admixture}).

Although we have used \textsc{Tree-Child Orientation} as a case study for performance evaluation, Algorithm~\ref{alg:proposed.BF} is applicable to orientation problems for \mypink{any} classes $\mathcal{C}$ other than tree-child networks \mypink{as long as the membership test is tractable}. For example, orientation for stack-free networks \cite{semple2018phylogenetic} or tree-based networks \cite{10.1093/sysbio/syv037} is expected to be a good application, because one can quickly decide whether a given network belongs to such a class. 
Speeding up Algorithm \ref{alg:proposed.BF} and extending it to non-binary networks are topics for future research. It would be also interesting to consider alternative FPT algorithms that do not use a cycle basis.


Improving the accuracy of Algorithm~\ref{alg:proposed.fast} is also an interesting direction for future research. As discussed in Section \ref{sec:discussion}, there are clearly rooms for improvement in the current objective function. Introducing a more suitable objective function may lead to the development of faster and more useful tree-child orientation heuristics.

\begin{appendices}
\renewcommand\theHtable{AABB\arabic{table}}
\renewcommand\theHfigure{AABB\arabic{figure}}

\section{Algorithm for generating undirected binary phylogenetic networks}\label{secA1}
To create undirected binary phylogenetic networks on $X$ for the experiments, we used a simple method explained below (see  Table \ref{table:graph.generate} and Fig. \ref{fig:graph.generate} for illustrations). The code can be found at \url{https://github.com/hayamizu-lab/tree-child-orienter/tree/main/Appendix}. 
This method uses the idea of the coalescent model which is a popular approach for simulating phylogenetic trees. 

Given a set $X$ of $n$ present-day species, we trace back $n$ lineages from the present to the past. At each step, lineages can either split with probability $P_r$ or coalesce with probability $(1 - P_r)$. If two lineages coalesce, a tree vertex is created, and the set of extant taxa is updated. If a lineage splits, a reticulation is created, and the set of extant taxa is updated. This process continues until all lineages coalesce into a single vertex, the root. The resulting graph becomes a rooted directed binary phylogenetic network on $X$ after the vertices of in-degree and out-degree $1$ are suppressed and the necessary vertex and arc are added to resolve the nonbinary vertices. It can then be converted to an undirected phylogenetic network on $X$ after  suppressing the root and by ignoring all arc orientations and non-leaf vertex labels. 

By adjusting the value of $P_r$, we can generate phylogenetic networks with various reticulation numbers. When $P_r = 0$, no reticulations occur, and the generated networks are guaranteed to be phylogenetic trees. If $P_r$ has been set a larger value, the algorithm tends to produce graphs having more reticulations, as shown in Table \ref{table:materials}.  Thus, we generated undirected binary phylogenetic networks on $X$ with varying levels of complexity, and selected suitable ones in each computational experiment.

 For Experiments \ref{expt:small_graph} and \ref{expt:large_graph}, we generated $1200$ various undirected binary phylogenetic networks on $X$.
  The method requires the number $|X|$ of leaves and reticulation probability parameter $P_r$ to be specified. 
Table \ref{table:materials} summarises the breakdown of the generated graphs.

\begin{table}[htbp]
    \caption{An illustration of the graph generation method used in this study (see also Fig. \ref{fig:graph.generate}). 
        }
   \label{table:graph.generate}
    \begin{tabular}{@{}cccccc@{}}
                \toprule
                Step & Taxon set & Selected lineage(s) & Event & New taxa & New arcs \\ \midrule
                1 & $\lbrace1,2,3,4\rbrace$ & 3,4 & Coalesce & 5 & $(5,3),(5,4)$ \\
                2 & $\lbrace1,2,5\rbrace$ & 2 & Split & 6,7 & $(6,2),(7,2)$ \\
                3 & $\lbrace1,5,6,7\rbrace$ & 1,6 & Coalesce & 8 & $(8,1),(8,6)$ \\
                4 & $\lbrace5,7,8\rbrace$ & 7,8 & Coalesce & 9 & $(9,7),(9,8)$ \\
                5 & $\lbrace5,9\rbrace$ & 5 & Split & 10,11 & $(10,5),(11,5)$ \\
                6 & $\lbrace9,10,11\rbrace$ & 9,10 & Coalesce & 12 & $(12,9),(12,10)$ \\
                7 & $\lbrace11,12\rbrace$ & 11,12 & Coalesce & 13 & $(13,11),(13,12)$ \\
                8 & $\lbrace13\rbrace$ & - & - & - & - \\ \bottomrule
                \end{tabular}
\end{table}

  
\begin{figure}[htbp]
    \begin{center}
    \includegraphics[width=1.0\textwidth]{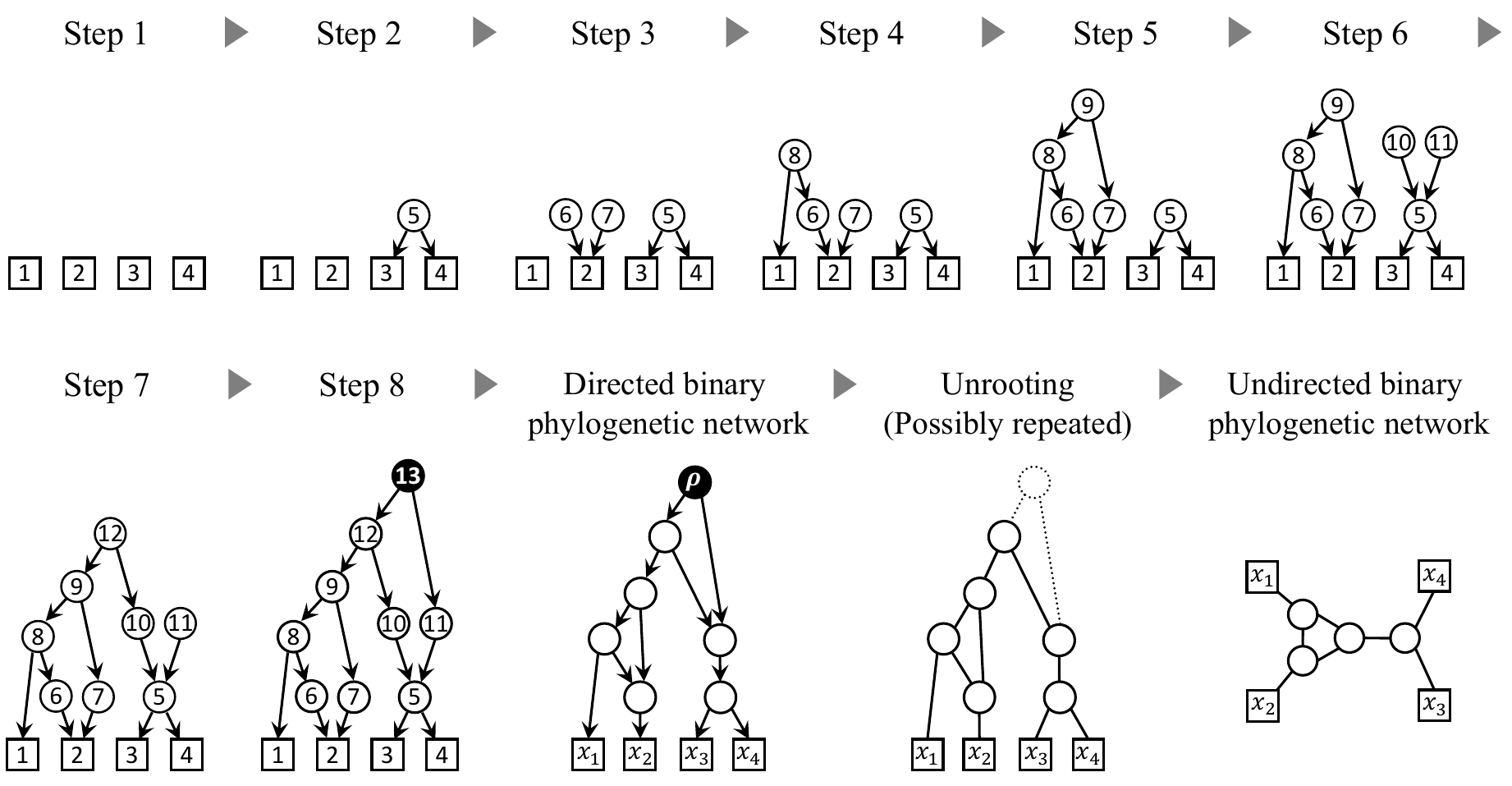}
    \caption{
        An illustration of how the process in Table \ref{table:graph.generate} yields a directed binary phylogenetic network on $X=\{x_1,x_2,x_3,x_4\}$ and how it is converted to an undirected one. 
\label{fig:graph.generate}}
    \end{center}
\end{figure}

\begin{table}[htbp]
    \caption{
        Summary of the undirected binary phylogenetic networks on $X$ generated before the experiments. 
            \label{table:materials}
        }
    \begin{tabular}{@{}lrrrrrrrrrrrr@{}}
    \toprule                                         & \multicolumn{11}{c}{Number $r$ of reticulations}                               &       \\
    $(|X|, P_r)$ & $0$   & $1$   & $2$   & $3$   & $4$   & $5$  & $6$  & $7$  & $8$  & $9$ & $10 \leq$ & Total \\ \midrule
    (10, 0.05)                           & 143 & 48  & 8   & 1   & 0   & 0  & 0  & 0  & 0  & 0 & 0 & 200   \\
    (10, 0.1)                             & 99  & 67  & 20  & 9   & 3   & 2  & 0  & 0  & 0  & 0 & 0 & 200   \\
    (10, 0.15)                            & 66  & 55  & 29  & 21  & 20  & 6  & 1  & 0  & 1  & 0 & 1 & 200   \\
    (20, 0.05)                           & 73  & 82  & 31  & 10  & 3   & 0  & 0  & 0  & 1  & 0 & 0 & 200   \\
    (20, 0.1)                             & 37  & 62  & 40  & 29  & 16  & 5  & 2  & 6  & 2  & 1 & 0 & 200   \\
    (20, 0.15)                            & 14  & 19  & 37  & 38  & 30  & 18 & 18 & 10 & 7  & 4 & 5 & 200   \\\midrule
    Total                                & 432 & 333 & 165 & 108 & 72 & 31 & 21 & 16 & 11 & 5 & 6 & 1200  \\ \bottomrule
    \end{tabular}
\end{table}



\end{appendices}

\backmatter

\bmhead{Supplementary information}
The source code, datasets and detailed experimental results are available at 
\url{https://github.com/hayamizu-lab/tree-child-orienter}. 
 (archived at swh:1:dir:666a10c01c14741702fddd5f8704b30bc90299e5).
\bmhead{Acknowledgements}
MH is supported by JST FOREST Program Grant Number JPMJFR2135, Japan. We thank Yufeng Wu and Louxin Zhang for sharing code for randomly generating graphs, which we adapted for our experiments. We also appreciate the anonymous reviewers’ careful reading of our manuscript and useful comments.

\bibliography{sn-article}

\end{document}